\documentclass[a4paper]{quantumarticle}
\pdfoutput=1

\usepackage{algorithm}
\usepackage{algpseudocode}

\usepackage{graphicx}
\usepackage{amsmath,amssymb,amsfonts}
\usepackage{amsthm}
\newtheorem{proposition}{Proposition}
\theoremstyle{remark}
\newtheorem{remark}{Remark}
\theoremstyle{plain}
\usepackage[english]{babel}
\usepackage{braket}
\usepackage{booktabs}
\usepackage{siunitx}
\usepackage[version=4]{mhchem}
\AtBeginDocument{\RenewCommandCopy\qty\SI}

\usepackage[colorlinks=true,hyperindex,breaklinks=true,allcolors=quantumviolet]{hyperref}

\begin{document}

\title{Logarithmic depth compression of Heisenberg Hamiltonian simulation by fan-out parallelization, with built-in error detection}
\author{Artemiy Burov}
\affiliation{School of Life Sciences, University of Applied Sciences Northwestern Switzerland (FHNW), Hofackerstrasse 30, CH-4132 Muttenz, Switzerland}
\affiliation{\'{E}cole Polytechnique F\'{e}d\'{e}rale de Lausanne (EPFL), CH-1015 Lausanne, Switzerland}
\author{Cl\'{e}ment Javerzac}
\email{clement.javerzac@fhnw.ch}
\affiliation{School of Life Sciences, University of Applied Sciences Northwestern Switzerland (FHNW), Hofackerstrasse 30, CH-4132 Muttenz, Switzerland}
\affiliation{\'{E}cole Polytechnique F\'{e}d\'{e}rale de Lausanne (EPFL), CH-1015 Lausanne, Switzerland}
\affiliation{MatterDecoder, CH-3008 Bern, Switzerland}

\begin{abstract}
Noisy intermediate-scale quantum computers are constrained by circuit depth, while many Hamiltonian simulation methods, such as product-formula simulation of spin systems, lead to narrow and deep circuits. Here we introduce a fan-out-based gadget compiler that trades circuit depth for width in simulations of Heisenberg-type nuclear magnetic resonance (NMR) Hamiltonians. Each logical spin is encoded into a small repetition-code register sized by its interaction degree, so that all pairwise interactions of a given Pauli type execute in parallel after a logarithmic-depth CNOT fan-out, and the redundant registers provide error detection for post-selection at no additional algorithmic overhead. The central result is a resource comparison of the two compilations under a fixed protocol, transpiled to a heavy-hex superconducting coupling map and to all-to-all trapped-ion connectivity across a set of NMR spin systems. For interaction graphs with a high-degree hub the volume-optimal schedule reduces the two-qubit depth by a factor of two and the volume by a factor of 1.7 for the 13-spin demonstration, which on heavy-hex also carries a lower two-qubit gate count, and the depth reduction rises to 2.5-fold on all-to-all connectivity for the highest-degree molecule studied. On all-to-all the two-qubit gate count rises for every system, so the volume reduction is a benefit on depth-limited hardware. The gain grows with the degree inhomogeneity of the interaction graph and vanishes for dense uniform graphs, where the schedule family falls back to the sequential circuit. As a worked example we simulate the zero-field NMR spectrum of tetramethylsilane, a 13-spin star system. Under a noise model scaled from the published calibration of a present-day quantum processor, the shallower gadget circuits match or surpass the sequential compilation only after post-selection on their built-in error detection, once error rates improve by one to one and a half orders of magnitude. We verify the spectra against an independent classical computation.
\end{abstract}

\maketitle

\section{Introduction}
In the noisy intermediate-scale quantum (NISQ) era \cite{preskill_2018} we have strict requirements on the programs that quantum computers are able to run (Fig.~\ref{fig:milestones}). Both currently dominant architectures of quantum computers, the superconducting architecture and the trapped-ion architecture, operate within a tight qubit and gate layer budget \cite{google_2019, ustc_2021, ibm_2023, google_2024, rigetti_2024, ustc_2025, ibm_2026, ionq_2019, honeywell_2021, quantinuum_2023, ionq_2024, quantinuum_2024, quantinuum_2025, ionq_2026}. This motivates flexibility in quantum algorithms to fit the available resources. Product-formula circuits for nuclear magnetic resonance (NMR) simulation illustrate the tension: they are deep and narrow, while current quantum computers favor shallower and wider circuits. In this work we show a way to convert narrow deep circuits into wide shallow circuits for the simulation of spin Hamiltonians, and we quantify the conversion with a fixed resource-comparison protocol across a set of NMR spin systems and two hardware architectures. In previous work we approached the same tension from the complementary side, executing deep sequential NMR circuits on present hardware with error suppression \cite{burov2025largecircuitexecutionnmr}; here we reshape the circuits themselves. As a demonstration we simulate the zero-field NMR spectrum of tetramethylsilane, a 13-spin system whose star-shaped interaction graph is the geometry in which the compression is largest. Zero- to ultralow-field NMR \cite{ledbetter_2011, blanchard_2016} offers spectra with simple exact structure against which the simulation can be judged, and our reference spectra are verified with an independent classical computation \cite{SPINACH_HOGBEN}. A similar approach extends to solid-state NMR crystallography, where the problem size can scale into the classically intractable regime. Refs.~\cite{seetharam_2023, elenewski2024} study the simulation of different NMR experiments on quantum computers. Together with the ability to execute deep quantum circuits with reduced noise through error mitigation and error suppression \cite{cai_2023, burov2025largecircuitexecutionnmr}, logarithmic depth compression reduces the hardware cost of quantum utility for computational NMR \cite{burov2024quantumutilitynmrquantum}.

The mechanism of copying qubits to parallelize commuting operations dates back to Moore and Nilsson \cite{doi:10.1137/S0097539799355053} and underlies several recent constructions, from ancilla-parallelized simulation algorithms \cite{zhang_2024, boyd_2023} to parallelized fault-tolerant compilation \cite{litinski_2019, ppr_2026}. To the best of our knowledge the specific combination presented here is new for product-formula simulation of spin Hamiltonians, whose interaction terms decompose into two-local Pauli blocks of uniform type: the fan-out registers are sized by the interaction degree, a single parameter moves the schedule from the sequential circuit to the fully parallel one, and the construction is compiled and transpiled end to end to routed hardware; the extension to generic Pauli-sum Hamiltonians is left to future work. Our quantitative claim is a statistically controlled account of when this parallelization pays on hardware resources: with seed-controlled statistics on transpiled circuits we map, across a set of spin systems and two hardware architectures, where the two-qubit depth, gate count, and width-depth volume improve. The best resource tradeoff is reached at a partial fan-out, between the sequential circuit and the fully parallel one, and the gain grows with the degree inhomogeneity of the interaction graph, appearing only once a hub is sufficiently high-degree. The redundancy of the registers simultaneously provides error-detecting parity checks at no additional gate or measurement cost, recovering inside product-formula circuits the intrinsic-redundancy protection known from parity-encoded architectures for combinatorial optimization \cite{lechner_2015, pastawski_2016}. Zero-field NMR spectra have been computed on quantum hardware for a four-spin system \cite{seetharam_2023}, and larger high-field systems were executed in our previous work \cite{burov2025largecircuitexecutionnmr}; the demonstration presented here makes no scale claim: its object is the comparison of compilation strategies under a device-noise model extrapolated to lower error rates.

\begin{figure*}[tp]
    \centering
    \includegraphics[width=\textwidth]{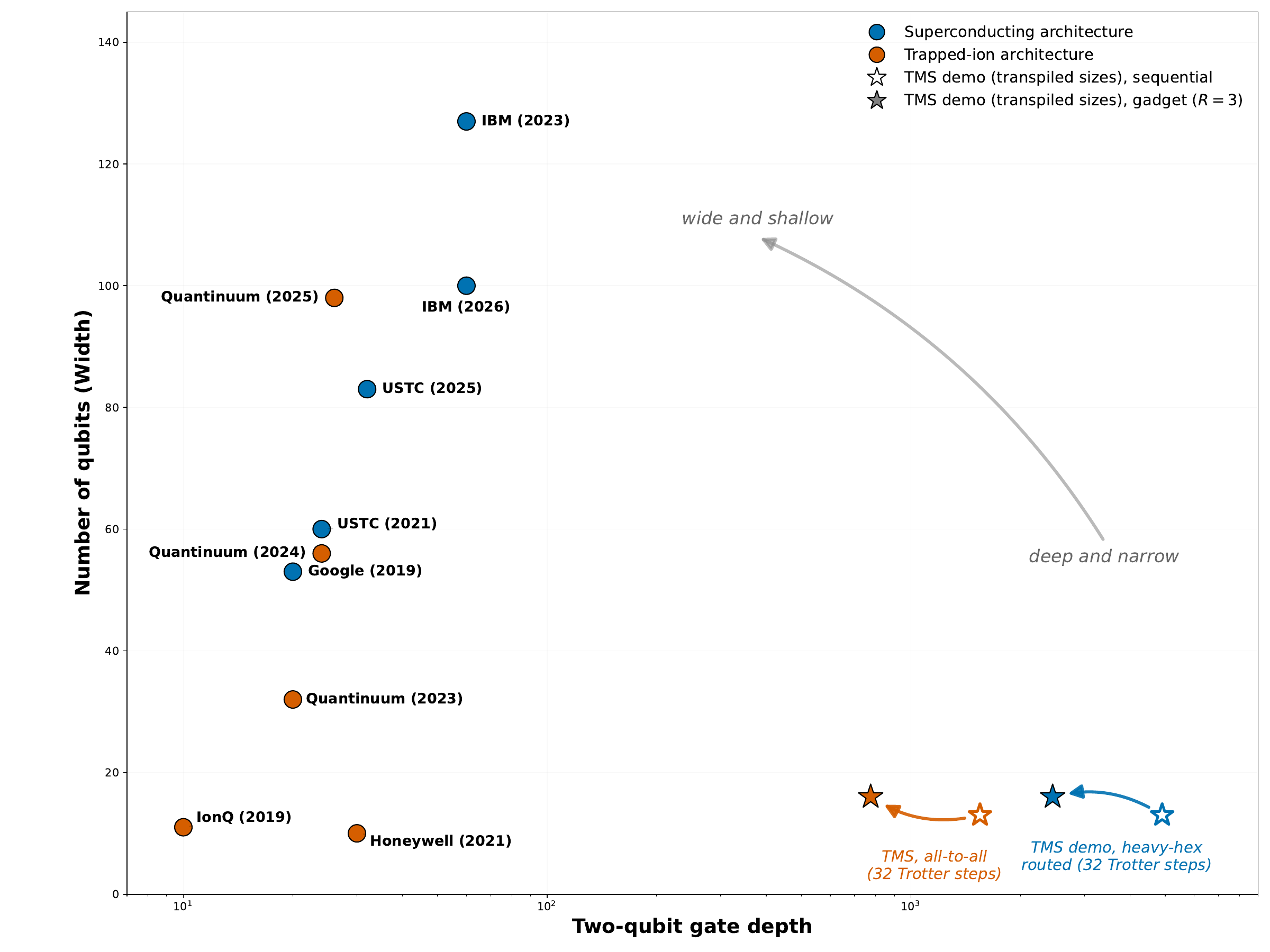}
    \caption{Circuit width versus two-qubit gate depth for published milestone experiments on quantum devices, together with the transpiled circuit sizes of this work's demonstration (stars): TMS ($^{29}$Si(CH$_3$)$_4$, 13 spins) at zero field, second-order product formula with $M=32$ Trotter steps. The stars are compiled-and-transpiled circuit sizes, computed classically rather than run on hardware. Device coordinates are the qubit counts and two-qubit-gate depths of the cited experiments (random-circuit-sampling cycles and layers of CNOT or CZ gates for superconducting devices; parallel two-qubit-gate layers for Quantinuum; for IonQ the two-qubit gate count, which for the plotted Bernstein--Vazirani circuit equals the depth since every two-qubit gate addresses the same ancilla). Arrows connect the sequential compilation (open star) to the gadget compilation at the optimal schedule $R=3$ (filled star): blue, routed to a fixed 16-qubit subgraph of ibm\_aachen (sequential 13 qubits, depth 4903; gadget 16 qubits, depth 2452); orange, all-to-all connectivity with native M{\o}lmer--S{\o}rensen gates (sequential depth 1548; gadget 775). The heavy-hex circuits are the ones used in the noisy simulation of Sec.~\hyperref[sec:error_detection]{Error detection and post-selection}, under an extrapolated-calibration noise model: the per-qubit ibm\_aachen calibration scaled to median two-qubit errors between the present-day $1.5\times10^{-3}$ and $5\times10^{-5}$, with correspondingly scaled coherence times. The demonstration depths count parallel two-qubit-gate layers, as for the superconducting and Quantinuum device points; where two-qubit gates do not run in parallel, the two-qubit gate count is the binding cost, and the all-to-all fan-out raises it (from 1548 to 2194 here). The milestone experiments were executed at heterogeneous output fidelities (from 0.03\% cross-entropy fidelities for the deepest random-circuit-sampling experiments to 78\% average algorithm success rates), so the device points indicate demonstrated circuit sizes rather than a single capability frontier; the demonstration circuits, one to two orders of magnitude deeper than the device points, are shown for context.
    \textit{Superconducting architecture:}
    Google 2019 \cite{google_2019},
    USTC 2021 \cite{ustc_2021},
    IBM 2023 \cite{ibm_2023},
    USTC 2025 \cite{ustc_2025},
    and IBM 2026 \cite{ibm_2026}.
    \textit{Trapped-ion architecture:}
    IonQ 2019 \cite{ionq_2019},
    Honeywell 2021 \cite{honeywell_2021},
    Quantinuum 2023 \cite{quantinuum_2023},
    Quantinuum 2024 \cite{quantinuum_2024},
    and Quantinuum 2025 \cite{quantinuum_2025}.
    }
    \label{fig:milestones}
\end{figure*}

We map the liquid-state NMR Hamiltonian of $N$ nuclear spins \cite{levitt, burov2024quantumutilitynmrquantum} onto the following qubit Hamiltonian:
\begin{equation}
    H = \sum_{k=1}^N \omega_{k} I_k^X + 2\pi \sum_{k<l} J_{kl} \vec{I}_k \cdot \vec{I}_l,
    \label{NMRH2}
\end{equation}
where $\omega_k$ is the offset frequency of spin $k$ in the rotating frame (determined by its chemical shift $\delta_k$, the static field strength $B_0$, and the gyromagnetic ratio $\gamma$ of the nucleus), $J_{kl}$ is the scalar coupling constant between spins $k$ and $l$, and $\vec{I}_k = (I_k^X, I_k^Y, I_k^Z)$ is the spin-$\tfrac{1}{2}$ angular momentum operator. The Zeeman term is written along $X$ instead of the conventional $Z$: a global relabeling of the spin axes that leaves the spectrum unchanged and places the detected magnetization on the qubit measurement axis $Z$. Zero-field experiments, including the demonstration in Sec.~\hyperref[sec:error_detection]{Error detection and post-selection}, correspond to the special case $\omega_k = 0$, in which only the coupling terms remain; the parallelization introduced below acts on the coupling terms and therefore applies unchanged in both regimes.
Here we propose a method for logarithmic compression of the depth of the product formula circuits required for the simulation of the Hamiltonian in Eq.~\ref{NMRH2}, at the cost of an increase in the number of required qubits from $N$ to $\sum_k \max(1, d_k)$, where $d_k$ is the degree of spin $k$ in the interaction graph. This overhead is quadratic for fully connected systems and linear for sparse topologies such as chains. As with similar depth compression methods described in other contexts \cite{loke2026distributedquantumcomputingdistributed, gokhale2020quantumfanoutcircuitoptimizations}, compression is achieved by using the fact that the Z-basis repetition code has a logical Z operator of Hamming weight $d_Z=1$. This allows us to parallelize interactions that address the same qubit by copying that qubit via a binary CNOT tree. The depth of the binary tree scales logarithmically with the number of parallelized interactions. The qubits used for copying are freed upon executing the interactions and can be used for error detection and post-selection.

The paper is organized as follows. We first write the product-formula circuits for the liquid-state NMR Hamiltonian, then introduce the fan-out gadget and its rounds-parameterized schedule and report the resource comparison across spin systems and two hardware architectures (Sec.~\hyperref[sec:gadget]{Parallelization/gadget}). Section~\hyperref[sec:error_detection]{Error detection and post-selection} develops the built-in error detection and demonstrates the full pipeline on the zero-field spectrum of tetramethylsilane under a device-noise model extrapolated to lower error rates. We close with an outlook toward fermionic and central-spin applications; the appendices collect the noise model, compilation and transpilation statistics, and the connectivity and star-scaling studies.

\begin{figure}[htbp]
    \centering
    \includegraphics[width=\textwidth/3]{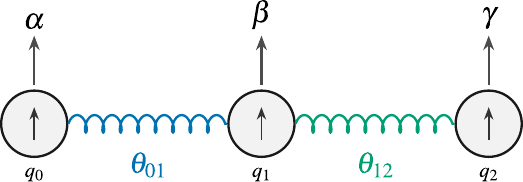}
    \caption{Example of a $3$-spin molecule with chain connectivity mapped to qubits $\{ q_0, q_1, q_2 \}$ with respective Zeeman frequencies $\{ \alpha, \beta, \gamma \}$ and spin-spin coupling parameters $\theta_{01}$ and $\theta_{12}$.}
    \label{fig:chain}
\end{figure}

\section{Product formula circuits for liquid-state NMR without secular approximation}

Eq.~\ref{pf_exp} shows the first-order, one-step product formula \cite{lloyd_1996, suzuki_1991} for the Hamiltonian in Eq.~\ref{NMRH2} without secular approximation. Higher-order formulas and multiple Trotter steps can be used to improve accuracy \cite{childs_2021} at the cost of deeper circuits; the parallelization introduced in Sec.~\hyperref[sec:gadget]{Parallelization/gadget} applies identically regardless of the chosen order or step count, since it acts on each block of same-type Pauli terms independently.

\begin{figure*}[tp]
    \centering
    \includegraphics[width=\textwidth]{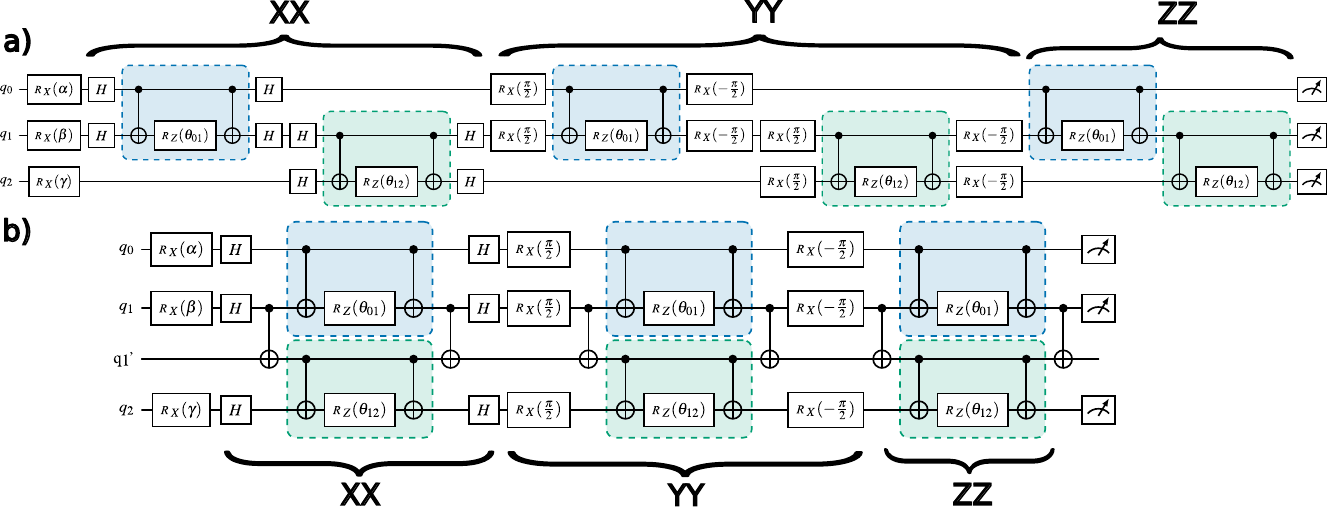}
    \caption{Not-parallelized (a) and parallelized (b) circuits for chain connectivity, see Eq.~\ref{pf_chain}. Colored blocks correspond to the interaction terms with coefficients $\theta_{kl}$ (colors in Fig.~\ref{fig:chain}).}
    \label{fig:chain_circs}
\end{figure*}

\begin{equation}\label{pf_exp}
    \begin{split}
        &\exp\biggl[ -it\biggl( \sum_{k=1}^N \omega_{k}I^X_k + 2\pi\sum_{k<l}J_{kl}\vec{I}_k \cdot \vec{I}_l \biggr) \biggr] \approx \\
        &\approx \exp\biggl( -it\sum_{k=1}^N \omega_{k}I^X_k \biggr)\cdot\exp\biggl( -2it\pi\sum_{k<l}J_{kl}I^X_kI^X_l \biggr) \cdot \\
        &\cdot \exp\biggl( -2it\pi\sum_{k<l}J_{kl}I^Y_kI^Y_l \biggr) \cdot \exp\biggl( -2it\pi\sum_{k<l}J_{kl}I^Z_kI^Z_l \biggr)
    \end{split}
\end{equation}

When the Zeeman splittings $\omega_{k}$ are much larger than the couplings $J_{kl}$, as is the case in high-field liquid-state NMR, it is advantageous to work in the interaction picture, where the single-spin Zeeman rotations (the $X$ block in Eq.~\ref{pf_exp}) are absorbed into the classical frame and only the bilinear coupling terms ($XX$, $YY$, $ZZ$) remain in the Trotterized circuit. This reduces the circuit depth per step \cite{low_wiebe_2018, burov2026}. The numerical demonstration in Sec.~\hyperref[sec:error_detection]{Error detection and post-selection} takes place at zero field, where no Zeeman terms are present, and uses a second-order product formula with $M=32$ Trotter steps.

As a concrete example, consider a $3$-spin molecule with chain connectivity. As depicted in Fig.~\ref{fig:chain}, we map the 3 spins to qubits $\{ q_0, q_1, q_2 \}$ with respective Zeeman frequencies $\{ \alpha, \beta, \gamma \}$ and spin-spin coupling parameters $\theta_{01}$ and $\theta_{12}$. The corresponding first-order, one-step product formula is then given in Eq.~\ref{pf_chain}, which corresponds exactly to the circuit displayed in Fig.~\ref{fig:chain_circs}.

\begin{equation}\label{pf_chain}
    \begin{split}
        &\exp\biggl( -it(\alpha I^X_0 + \beta I^X_1 + \gamma I^X_2) \biggr)\cdot \\
        &\cdot\exp\biggl( -2it\pi(\theta_{01} I^X_0I^X_1 + \theta_{12} I^X_1I^X_2) \biggr) \cdot \\
        &\cdot\exp\biggl( -2it\pi(\theta_{01} I^Y_0I^Y_1 + \theta_{12} I^Y_1I^Y_2) \biggr) \cdot \\
        &\cdot\exp\biggl( -2it\pi(\theta_{01} I^Z_0I^Z_1 + \theta_{12} I^Z_1I^Z_2) \biggr)\\
    \end{split}
\end{equation}

Generally in this circuit, the different contributions within the bilinear coupling terms ($XX$, $YY$, $ZZ$) cannot be executed in parallel and the circuit depth scales as $O(|E|)$, where $E$ is the edge set of the interaction graph (i.e.\ the set of spin pairs with non-zero $J_{kl}$), which gives $O(N^2)$ for fully connected graphs.

\section{Parallelization/gadget}\label{sec:gadget}

Depth compression by trading additional qubits, measurements, or nonlocal fan-out resources has been explored in several settings \cite{doi:10.1137/S0097539799355053, v001a005, q418-pydy, BROADBENT20092489}, algebraic circuit identities give compression without added qubits \cite{PhysRevA.105.032420}, and logarithmic-depth improvements have been demonstrated with other quantum primitives such as the preparation of entangled states \cite{logGHZ}. Ancilla-based parallelization has been developed for Hamiltonian-simulation algorithms at the asymptotic level \cite{zhang_2024}, for commuting terms in the SELECT subroutine of linear-combination methods \cite{boyd_2023}, and as serial-to-parallel tradeoffs in fault-tolerant compilation \cite{litinski_2019, ppr_2026}; the first two act within linear-combination-of-unitaries and quantum-walk algorithms, Zhang et al.\ compressing the dependence of the depth on the target precision and Boyd diagonalizing commuting terms in a block-encoding multiplexer, whereas the present construction parallelizes a product-formula circuit along the spatial dimension of the interaction graph and supplies an error-detecting register and end-to-end routed-hardware evaluation absent from that line of work. Tunable width-depth tradeoffs are also known for diagonal-unitary synthesis on planar grids \cite{xu_2026}, and a recent compiler for spin and fermionic Hamiltonian simulation reduces depth by gate synthesis and partial Trotterization within a fixed qubit count \cite{kernpiler_2025}, complementary to the width-for-depth trade made here. The circuit fragment at the heart of our construction, a CNOT fan-out followed by parallel two-qubit rotations and the inverse fan-out, is a batched form of the phase gadgets of circuit synthesis \cite{cowtan_2020}; we use the term gadget in this compilation sense, distinct from the perturbative Hamiltonian gadgets of complexity theory \cite{kempe_2006}. Here we consider a system of $N$ spins governed by a liquid-state NMR Hamiltonian as defined in Eq.~\ref{NMRH2}.
In a standard quantum simulation mapping, each logical spin $k$ is assigned to a single physical qubit $q_k$. 
The implementation of interaction terms $U_{kl}(\theta) = \exp(-i \frac{\theta}{2} P_k P_l)$, where $P \in \{X, Y, Z\}$ denotes a Pauli operator (with $\theta$ absorbing the factor of $1/4$ from the spin-$\tfrac{1}{2}$ convention $I^P = \sigma^P/2$), is constrained by hardware topology. 
Specifically, two terms sharing a common logical spin, such as $U_{kl}$ and $U_{km}$, cannot be executed at the same time on a standard architecture because they require access to the same physical resource $q_k$.

\begin{figure*}[tp]
    \centering
    \includegraphics[width=\textwidth]{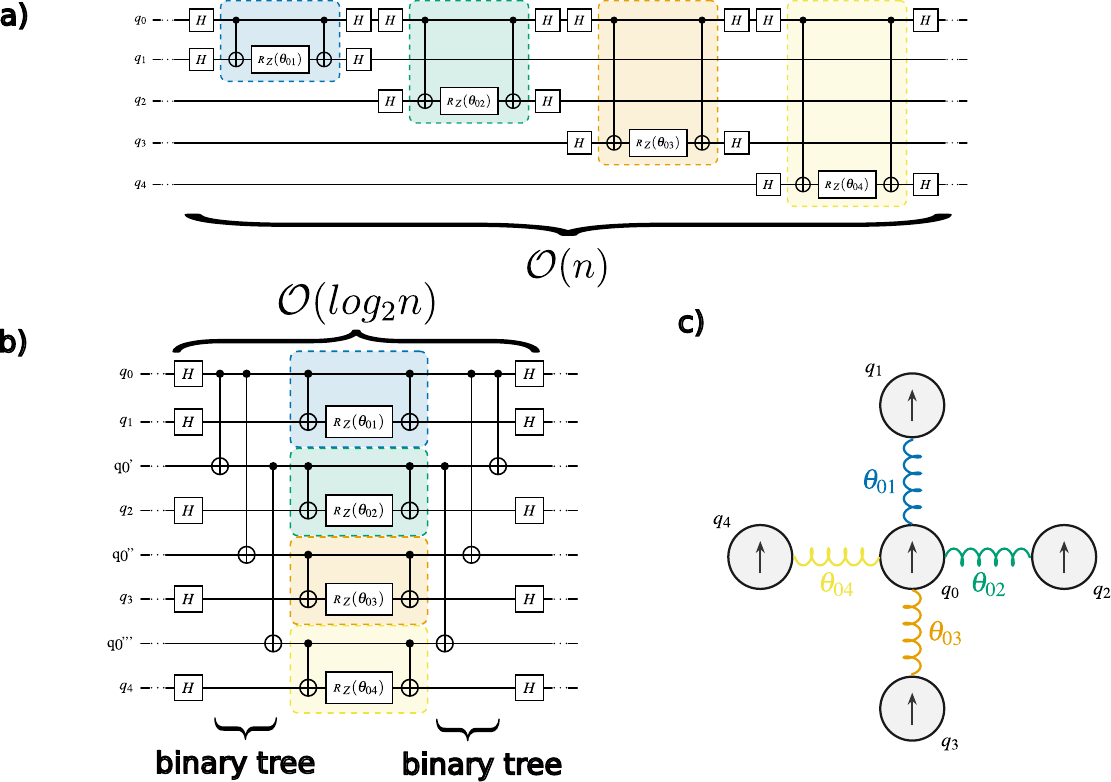}
    \caption{Four two-qubit operations that share a qubit, executed in sequence (a) and in parallel using the gadget (b), meant to be understood as a part of a larger quantum circuit. The corresponding star-topology molecule (c). Colored blocks correspond to the interaction terms with coefficients $\theta_{kl}$.} 
    \label{fig:bin_tree}
\end{figure*}

The gadget compiler (detailed pseudocode in Fig.~\ref{alg:gadget_parallel}) uses a redundant register mapping where each logical spin $k$ is represented by a cluster of $n_k$ physical qubits $\mathcal{Q}_k$, with $d_k$ the degree of spin $k$ in the interaction graph. In the maximally parallel form (full fan-out) $n_k = \max(1, d_k)$; the rounds parameter $R$ of Sec.~\ref{subsec:scheduled} generalizes this to $n_k = \max(1, \lceil d_k/R\rceil)$, down to a single qubit per spin:
\begin{equation}
    \mathcal{Q}_k = \{ q_{k,0}, q_{k,1}, \dots, q_{k,n_k-1} \}
\end{equation}
The set consists of a root qubit $q_{k,0}$ and $n_k-1$ ancillas.
At full fan-out the total number of physical qubits is $\sum_k n_k = \sum_k \max(1, d_k)$, which equals $2|E|$ when every spin participates in at least one coupling. For a fully connected interaction graph ($|E| = N(N-1)/2$) this yields $O(N^2)$ qubits (see Fig.~\ref{fig:bin_tree}), whereas sparse graphs such as linear chains ($|E| = N-1$) require only $O(N)$; the scheduled gadget uses fewer, $\sum_k \max(1, \lceil d_k/R\rceil)$.
The simulation is structured as a sequence of $M$ Trotter steps, each subdivided into interaction blocks $b \in \{1, \dots, B\}$ corresponding to different Pauli bases $\{XX, YY, ZZ\}$. 
An interaction block is defined by the unitary sequence:
\begin{equation}
    \mathcal{U}_b = W_b^\dagger \left( \prod_{k=1}^N F_k^\dagger \right) U_{b, \text{parallel}} \left( \prod_{k=1}^N F_k \right) W_b
\end{equation}
where $W_b$ is the local basis transformation (e.g., $W=H^{\otimes N}$ for $XX$ terms), $F_k$ is the fan-out operator, and $U_{b, \text{parallel}}$ is the parallel interaction layer.

The operator $F_k$ is implemented as a binary tree of CNOT gates, as visualized in Fig.~\ref{fig:bin_tree}, which prepares the entangled logical state:
\begin{equation}
    \ket{\Psi}_k = F_k \ket{\psi}_{q_{k,0}} \ket{0}^{\otimes (n_k-1)} = \alpha \ket{0}^{\otimes n_k} + \beta \ket{1}^{\otimes n_k}
\end{equation}
Each logical edge $\{k, l\} \in E$ is mapped to a unique pair of physical qubits $(q_{k,a}, q_{l,b})$ from the allocated clusters. 
Because each physical qubit is involved in at most one interaction gate per time-bin, all terms of the same Pauli type can be executed in parallel. 
Mathematical equivalence is maintained because a $Z$-basis interaction applied to any member of the repetition-coded gadget is identical to an operation on the logical root. 
The block concludes with $F_k^\dagger$, which contracts the cluster to the root, followed by $W^\dagger$. 
This reduces the depth of the coupling layer from $O(|E|)$ to $O(\log \Delta)$, where $\Delta$ is the maximum degree of the graph.

\begin{figure*}[tp]
\small
\hrule
\vspace{3pt}
\begin{algorithmic}[1]
\Procedure{Compile}{$S, G=(V, E), R$}
    \State \textbf{Input:} Pauli terms $S$, interaction graph $G$, rounds $R$
    \State \textbf{Output:} Parallelized Quantum Circuit $QC$
    \State \Comment{Initialization and Physical Mapping}
    \For{each logical qubit $v \in V$}
        \State $d_v \gets \text{degree}(v)$ in $G$
        \State Allocate $n_v = \max(1, \lceil d_v / R \rceil)$ physical qubits for $v$
        \State Designate one as $v_{root}$ and others as $v_{copies}$
    \EndFor
    \State Map each edge $e = \{u, v\} \in E$ to a unique pair $(u_i, v_j)$ from allocated sets
    \State Partition $E$ into rounds $E_1, \dots, E_\rho$ by earliest-fit greedy $f$-coloring (each spin $v$ in $\le n_v$ edges per round; exact for stars, $\rho \le R+1$ not guaranteed on dense graphs, App.~B)

    \State $B \gets$ group $S$ into blocks of terms with same Pauli type
    \For{each block $b \in B$}
        \State $\text{type} \gets$ Pauli type of block $b$
        \If{$\text{type} = X$} \Comment{Single-spin Zeeman terms}
            \For{each term $(X, \{v\}, \theta) \in b$}
                \State Apply rotation $R_X(\theta)$ to $v_{root}$
            \EndFor
        \Else
            \State \Comment{Step 1: Basis Rotation}
            \For{each $v \in V$}
                \If{$\text{type} = XX$} apply $H$ to $v_{root}$
                \ElsIf{$\text{type} = YY$} apply $R_X(\pi/2)$ to $v_{root}$
                \EndIf \Comment{$ZZ$: no rotation needed}
            \EndFor

            \State \Comment{Step 2: Fan-Out (Binary Tree)}
            \For{each $v \in V$}
                \State Apply binary CNOT tree from $v_{root}$ through $v_{copies}$ \Comment{depth $O(\log n_v)$}
            \EndFor

            \State \Comment{Step 3: Parallel interaction in $\rho$ rounds}
            \For{$r = 1$ to $\rho$}
                \For{each edge $\{u, v\} \in E_r$}
                    \State $(u_i, v_j) \gets$ physical mapping for $\{u, v\}$
                    \State Apply $CX(u_i, v_j) \to R_z(\theta_{uv}, v_j) \to CX(u_i, v_j)$
                \EndFor
            \EndFor

            \State \Comment{Step 4: Fan-In (Inverse Binary Tree)}
            \For{each $v \in V$}
                \State Apply reverse-order binary CNOT tree to contract $v_{copies}$ to $v_{root}$
            \EndFor

            \State \Comment{Step 5: Inverse Basis Rotation}
            \For{each $v \in V$}
                \If{$\text{type} = XX$} apply $H$ to $v_{root}$
                \ElsIf{$\text{type} = YY$} apply $R_X(-\pi/2)$ to $v_{root}$
                \EndIf
            \EndFor
        \EndIf
    \EndFor
    \State \Return $QC$
\EndProcedure
\end{algorithmic}
\vspace{3pt}
\hrule
\caption{Scheduled gadget compiler. The rounds parameter $R$ sizes each spin's register to $n_v=\max(1,\lceil d_v/R\rceil)$ physical qubits, a root and $n_v-1$ copies, and interpolates between the full fan-out ($R=1$, register size equal to the spin's degree) and the sequential circuit ($R=\Delta$, one qubit per spin). Terms are grouped into same-type Pauli blocks; each $XX$ or $YY$ block is conjugated by a single-spin basis rotation on the roots (Steps~1 and 5), so within the block every interaction reduces to a $ZZ$ rotation on the copies, while a $ZZ$ block needs no rotation. Within a block the registers are fanned out once by binary CNOT trees (Step~2, depth $O(\log n_v)$), the interactions run in $\rho$ parallel rounds set by an $f$-coloring of the interaction graph ($\rho=\chi'_f(G)\le R+1$, attained by the earliest-fit greedy for the star systems here; Step~3), and the registers are fanned in once (Step~4), so the fan-out and fan-in trees are paid once per block instead of once per round, whatever the value of $R$.}
\label{alg:gadget_parallel}
\end{figure*}

\subsection{Scheduled parallelization}\label{subsec:scheduled}

The full fan-out (Fig.~\ref{alg:gadget_parallel} with $R=1$) allocates $\max(1, d_k)$ qubits to spin $k$ and executes all of its interactions in a single parallel layer. Between this extreme and the sequential circuit lies a family of schedules indexed by the rounds parameter $R$, which fixes the register sizes: spin $k$ is fanned out into a register of $n_k = \lceil d_k / R \rceil$ qubits, from a single qubit at $R=\Delta$ (the sequential circuit) to one register qubit per interaction at $R=1$ (the full fan-out), trading width against depth. The interactions then run in parallel rounds, each spin using at most its $n_k$ register qubits at a time, so a round is an $f$-coloring class of the interaction graph with capacities $n_k$, and the number of such rounds $\rho$ is bounded in Remark~\ref{rem:rho}. The terms are grouped by Pauli type, and within each type block the register is fanned out once, all $\rho$ interaction rounds run on the resulting copies, and the register is fanned in once. The copies persist across every round, so the fan-out and fan-in CNOT trees are paid once per block instead of once per round, whatever the value of $R$; interleaving the rounds across Pauli types or contracting the register between rounds would repeat them. The optimal $R$ balances this one-off fan-out cost against the parallelism gained per round.

\begin{remark}[round count]\label{rem:rho}
The number of interaction rounds is the degree-constrained ($f$-)chromatic index $\rho=\chi'_f(G)$ of the interaction graph $G$ with capacities $n_k$: the fewest classes into which the interactions partition with each spin in at most $n_k$ per class. Fanning out realizes the balanced vertex split that carries this $f$-coloring to an ordinary edge coloring of a simple graph of maximum degree $\Delta_f=\max_k\lceil d_k/n_k\rceil\le R$, and the $f$-coloring form of Vizing's theorem~\cite{hakimi_kariv_1986} (extended to multigraphs in Ref.~\cite{nakano_nishizeki_1988}) bounds it,
\begin{equation}
  \Delta_f \;\le\; \rho \;\le\; \Delta_f + 1 \;\le\; R+1 .
  \label{eq:rho_bound}
\end{equation}
The busiest register sets the lower bound $\Delta_f$, at most $R$ and below it when a register is over-provisioned; the extra round is the Vizing gap of the split, absent when $G$ is bipartite, the $f$-analog of K\"onig's theorem~\cite{konig_1916}, and present for some non-bipartite graphs, where deciding it is NP-hard already for ordinary edge coloring~\cite{holyer_1981}. The endpoints: $R=\Delta$ gives the split $G$ itself, so $\rho=\chi'(G)\in\{\Delta,\Delta+1\}$ by ordinary Vizing~\cite{vizing_1964}; $R=1$ gives each interaction its own copy, so $\rho=1$. The star systems studied here, the demonstration of Table~\ref{tab:resources} and the synthetic ladder of Appendix~\hyperref[app:stars]{D}, are bipartite stars, so $\rho=\Delta_f\le R$ with equality when $R\mid D$, and the earliest-fit greedy of Appendix~\hyperref[app:stats]{B} attains this optimum for them; for the multi-hub and cluster systems of Table~\ref{tab:molecules} the same greedy sets both the gadget and baseline round counts and need not reach the $f$-chromatic index.
\end{remark}

We can make both the depth compression and the width--depth trade-off precise for the star geometry of the demonstration and of the largest gains, a hub of degree $D$ coupled to $D$ leaves.

\begin{proposition}[star graph]\label{prop:staropt}
For a hub of degree $D$ coupled to $D$ leaves and rounds parameter $R$, in the routing-free limit and up to single-spin rotations, the two-qubit depth of a Pauli block is at most $2\lceil\log_2\lceil D/R\rceil\rceil + 2R$ (with equality when $R$ divides $D$), which at full fan-out ($R=1$) equals $2\lceil\log_2 D\rceil + 2 = O(\log D)$, a logarithmic compression of the sequential depth $2D$. Its width-depth product, the two-qubit volume,
\begin{equation}
V(R) = \bigl(D + \lceil D/R\rceil\bigr)\bigl(2\lceil\log_2\lceil D/R\rceil\rceil + 2R\bigr),
\label{eq:starvol}
\end{equation}
is minimized at $R^\ast = O(\log D)$ with $V(R^\ast) = O(D\log D)$; against the sequential volume $V(D) = 2D(D+1) = \Theta(D^2)$ the gadget lowers the volume for all sufficiently large $D$, with $V(R^\ast)/V(D) = O(\log D / D)$.
\end{proposition}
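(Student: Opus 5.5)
We prove the three claims in turn: depth by direct circuit counting, width by register counting, and the volume bounds by elementary asymptotics.

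\textbf{Depth.} Fix a Pauli block and drop the basis rotations, which are single-spin. The hub gets a register of $n=\lceil D/R\rceil$ qubits, and each leaf has degree one, so it keeps a single qubit. The fan-out $F$ on the hub is a binary CNOT tree. Each layer at most doubles the number of qubits holding the copied value, so $\lceil\log_2 n\rceil$ layers suffice and are necessary to reach $n$ qubits. The fan-in $F^\dagger$ costs the same. Leaves need no fan-out.

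By Remark~\ref{rem:rho}, the star is bipartite, so the interactions split into $\rho=\Delta_f=\lceil D/n\rceil\le R$ rounds, with equality when $R\mid D$. Indeed, when $R\mid D$ we have $n=D/R$ and $\lceil D/n\rceil=R$. In each round, every hub copy meets at most one leaf. Each interaction is compiled as $CX\,R_z\,CX$, so one round costs two two-qubit layers. The block depth is therefore at most $2\lceil\log_2 n\rceil+2\rho\le 2\lceil\log_2\lceil D/R\rceil\rceil+2R$.

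For equality when $R\mid D$, the leaf edges attached to a given copy must run sequentially: that copy meets $R$ leaves, giving $2R$ layers. These layers cannot overlap the tree layers on the same copy, because that copy is only loaded after the last tree layer that targets it. I would make this precise by exhibiting a critical path: root, then the deepest tree branch to a leaf copy, then its $R$ interactions, then back up the tree. This critical path is the one step needing care, since a naive argument could let early-loaded copies start interacting while the tree is still growing.

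Setting $R=1$ gives $2\lceil\log_2 D\rceil+2$. Setting $R=D$ gives $n=1$ and depth $2D$, the sequential case.

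\textbf{Width.} The width is $D$ leaves plus $n$ hub qubits, which gives Eq.~\eqref{eq:starvol}. At $R=D$ this is $(D+1)\cdot 2D$.

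\textbf{Asymptotics of the volume.} For the upper bound, choose $R=\lceil\log_2 D\rceil$. Then the width is at most $D+D/\log_2 D+1=O(D)$, and the depth is at most $2\log_2 D+2+2\lceil\log_2 D\rceil=O(\log D)$. So $V(R^\ast)\le V(\lceil\log_2 D\rceil)=O(D\log D)$.

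For the location of the minimizer, note that the width factor always exceeds $D$, so $V(R)\ge 2DR$. Any $R$ with $R\ge C\log D$ for a large constant $C$ therefore already loses to the explicit choice above.

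Below $\log D$ we bound the depth factor from below. For $R\le\sqrt D$ we have $\lceil D/R\rceil\ge\sqrt D$, so the tree term alone contributes at least $\log_2 D$. The volume is then at least $D\log_2 D$, which is $\Theta(D\log D)$ anyway. Combining the two regimes, the minimum value is $\Theta(D\log D)$, and some minimizer lies in $O(\log D)$. I expect this minimizer statement to be the main place where care is needed, because for $R$ between $\log D$ and $\sqrt D$ the volume is also $\Theta(D\log D)$-comparable. If needed, I would read the claim as saying that a minimizer can be taken with $R^\ast=O(\log D)$. Alternatively, I would compare the depths directly: $2R$ grows while the tree term decreases by only $O(1)$ per doubling of $R$, which pushes the true argmin to $\Theta(\log D)$.

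Finally, $V(R^\ast)/V(D)=O(D\log D)/\Theta(D^2)=O(\log D/D)$. This ratio is below $1$ for all sufficiently large $D$, so the gadget lowers the volume there.
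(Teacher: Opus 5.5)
Your proof is correct and follows the paper's argument. It uses the same fan-out, round and fan-in count (with $\rho=\lceil D/n\rceil\le R$ from the bipartite case of Remark~\ref{rem:rho}) and the same step $V(R)\ge 2DR$ played against an explicit $O(D\log D)$ witness; the paper takes $R=2$ where you take $R=\lceil\log_2 D\rceil$. Your critical-path sketch for the $R\mid D$ equality and your matching $\Omega(D\log D)$ lower bound go a little beyond what the paper writes out.

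Your closing hedge is unnecessary and partly wrong. The $V(R)\ge 2DR$ step already shows that every minimizer satisfies $R^\ast=O(\log D)$, so $V$ is not $\Theta(D\log D)$ all the way up to $R=\sqrt D$; there it is at least $2D^{3/2}$. Your heuristic that the argmin is $\Theta(\log D)$ is also off: the width excess $\lceil D/R\rceil$ multiplies the logarithmic tree depth and contributes a term of order $D\log D/R$, which balances the $2DR$ from the rounds at $R^\ast=\Theta(\sqrt{\log D})$. This is consistent with the paper's observed $R^\ast\in\{2,3,4\}$.
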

\begin{proof}
The width is $D$ leaves and $\lceil D/R\rceil$ hub-register qubits. The two-qubit depth of a Pauli block is a fan-out and a fan-in binary CNOT tree, each of depth $\lceil\log_2\lceil D/R\rceil\rceil$, enclosing at most $R$ interaction rounds of two CNOT layers each; their product is Eq.~\eqref{eq:starvol}, with equality when $R$ divides $D$. At $R=D$ the copies and the trees vanish, giving $V(D) = (D+1)(2D)$. At $R=2$ the width is at most $\lceil 3D/2\rceil$ and the depth is $2\lceil\log_2\lceil D/2\rceil\rceil + 4 = O(\log D)$, so $V(R^\ast)\le V(2) = O(D\log D)$. Since the width is at least $D$ and the depth at least $2R$, one has $V(R)\ge 2DR$; combined with $V(R^\ast)\le V(2)$ this forces $R^\ast = O(\log D)$. Finally $V(R^\ast)/V(D) = O(D\log D)/\Theta(D^2) = O(\log D / D)$.
\end{proof}

Numerically the minimizer is very flat, staying at $R^\ast \in \{2,3,4\}$ from $D=12$ to $D=100$, so the volume-optimal schedule uses a small, essentially degree-independent number of rounds and $\Theta(D)$ copies. The count is basis-independent: the ``up to single-spin rotations'' qualifier makes it invariant across the CNOT, CZ, and M{\o}lmer--S{\o}rensen bases, which are locally equivalent, so the depth and volume scalings carry over to hardware and only $O(1)$ constants change: an interaction rotation costs two entangling gates in the CNOT or CZ basis but only one native $R_{XX}$, as the two architectures of Table~\ref{tab:resources} bear out.

The star is the extremal case of an arbitrary interaction graph, with the degree inhomogeneity setting the gain.

\begin{proposition}[general graph]\label{prop:general}
For an interaction graph on $N$ spins with maximum degree $\Delta$ and average degree $\bar d = 2|E|/N$, the scheduled gadget in the routing-free limit has width $\sum_v \max(1,\lceil d_v/R\rceil)$ and Pauli-block two-qubit depth $2\lceil\log_2\lceil\Delta/R\rceil\rceil + 2\rho$, with $\rho=\chi'_f(G)\le R+1$ the $f$-chromatic index of Remark~\ref{rem:rho}, the fewest rounds the schedule admits. At full fan-out the depth is $O(\log\Delta)$, a logarithmic compression of the edge-colored sequential depth $\Theta(\Delta)$, and the width-depth volume obeys $V/V_{\mathrm{seq}} = O\bigl((\bar d/\Delta)\log\Delta\bigr)$. The gadget therefore lowers the volume for degree-inhomogeneous graphs, $\bar d = o(\Delta/\log\Delta)$, the star ($\bar d=\Theta(1)$) saturating the envelope $O(\log\Delta/\Delta)$ of Proposition~\ref{prop:staropt}.
\end{proposition}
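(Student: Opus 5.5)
We follow the template of the proof of Proposition~\ref{prop:staropt}, now with an arbitrary degree sequence. \textbf{Width.} By construction of the compiler (Fig.~\ref{alg:gadget_parallel}), spin $v$ receives $n_v=\max(1,\lceil d_v/R\rceil)$ physical qubits, so the width is $\sum_v n_v$. At $R=1$ this equals $\sum_v \max(1,d_v)\le 2|E|+N = N(\bar d+1)$.

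\textbf{Depth.} A Pauli block consists of one fan-out layer, the $\rho$ interaction rounds, and one fan-in layer. All registers are fanned out in parallel, so the fan-out has depth $\max_v\lceil\log_2 n_v\rceil=\lceil\log_2\lceil\Delta/R\rceil\rceil$, and the fan-in has the same depth. By Remark~\ref{rem:rho}, the interactions partition into $\rho=\chi'_f(G)\le R+1$ classes. Within a class each physical qubit carries at most one interaction, and each interaction costs two CNOT layers, so the rounds contribute $2\rho$. This gives the stated depth; minimality of $\rho$ is exactly the definition of $\chi'_f$. At $R=1$ every edge has its own copy, so $\rho=1$ and the depth is $2\lceil\log_2\Delta\rceil+2=O(\log\Delta)$.

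\textbf{Sequential baseline.} For the sequential circuit ($R=\Delta$, width $N$), the edge-colored depth is $2\chi'(G)\in\{2\Delta,2\Delta+2\}$ by Vizing~\cite{vizing_1964}, hence $\Theta(\Delta)$. It is at least $\Delta$ because the maximum-degree spin touches $\Delta$ sequential gates. So $V_{\mathrm{seq}}=\Theta(N\Delta)$.

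\textbf{Volume ratio.} Dividing the two volumes,
\[
V/V_{\mathrm{seq}}=O\bigl(N(\bar d+1)\log\Delta\bigr)/\Theta(N\Delta)=O\bigl((\bar d/\Delta)\log\Delta\bigr),
\]
using $\bar d\ge 1$ to absorb the $+1$ when no spin is isolated. If there are isolated spins, the term is absorbed into the constant only when $\bar d=\Omega(1)$. I expect this to be the main point needing care: the hypothesis must be stated as every spin coupled, or as $\bar d$ bounded below. The consequences then follow directly. The ratio tends to $0$ iff $\bar d\log\Delta/\Delta\to0$, i.e.\ $\bar d=o(\Delta/\log\Delta)$. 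For the star, $\bar d=2D/(D+1)=\Theta(1)$ and $\Delta=D$, which recovers $O(\log D/D)$, consistent with Proposition~\ref{prop:staropt}.

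A secondary subtlety is that the sequential depth is defined via an optimal edge coloring rather than a naive ordering. If the paper intends the greedy baseline of Appendix~B, its round count is still between $\Delta$ and $2\Delta-1$, which is $\Theta(\Delta)$, so the conclusion is unchanged.
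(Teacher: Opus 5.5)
Your proposal is correct and follows the paper's proof essentially step for step: the width comes from the register sizes, the parallel fan-out/fan-in trees are set by the largest register, each round adds two CNOT layers, the $R=\Delta$ baseline has volume $2N\chi'(G)=\Theta(N\Delta)$, and the ratio is evaluated at full fan-out $R=1$. Your isolated-spin caveat is well taken. The paper writes the $R=1$ width as $\sum_v d_v = 2|E| = N\bar d$, which silently assumes that every spin is coupled, and that is exactly the hypothesis you flag.
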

\begin{proof}
Spin $v$ is encoded in $\max(1,\lceil d_v/R\rceil)$ register qubits, so the width is $\sum_v\max(1,\lceil d_v/R\rceil)$; for the star this is $\lceil D/R\rceil$ hub-register qubits together with the $D$ single-qubit leaves, i.e.\ $D+\lceil D/R\rceil$, recovering Proposition~\ref{prop:staropt}.

Within a Pauli block each register is fanned out and back once by binary CNOT trees; the trees of different spins act on disjoint qubits in parallel, so their combined depth is set by the largest register, $2\lceil\log_2\lceil\Delta/R\rceil\rceil$ CNOT layers. Between them the interactions partition into $\rho=\chi'_f(G)\le R+1$ rounds (Remark~\ref{rem:rho}); the earliest-fit greedy attains this for stars and may exceed it on dense graphs (Appendix~\hyperref[app:stats]{B}). Each round is two CNOT layers, so the block depth is $2\lceil\log_2\lceil\Delta/R\rceil\rceil+2\rho$; at full fan-out ($R=1$) the trees have depth $2\lceil\log_2\Delta\rceil$ and one round remains, giving $2\lceil\log_2\Delta\rceil+2=O(\log\Delta)$.

The $R=\Delta$ endpoint keeps one qubit per spin, with round count the chromatic index $\chi'(G)=\Theta(\Delta)$ (Remark~\ref{rem:rho}); the earliest-fit greedy realizes this optimum for stars and stays $\Theta(\chi'(G))$ in general (Appendix~\hyperref[app:stats]{B}). With two CNOT layers per round and width $N$, its volume is $V_{\mathrm{seq}}=2N\chi'(G)=\Theta(N\Delta)$.

At $R=1$ the gadget width is $\sum_v d_v=2|E|=N\bar d$ and its depth is $2\lceil\log_2\Delta\rceil+2$, so $V(1)=2N\bar d\,(\lceil\log_2\Delta\rceil+1)$ and
\[
\frac{V(1)}{V_{\mathrm{seq}}}=\frac{\bar d\,(\lceil\log_2\Delta\rceil+1)}{\chi'(G)}=O\!\Bigl(\tfrac{\bar d}{\Delta}\log\Delta\Bigr).
\]
Since the family contains both the full fan-out and the sequential circuit, the volume-optimal schedule obeys $V(R^\ast)\le\min\{V(1),V_{\mathrm{seq}}\}$. The ratio above is $o(1)$ precisely when $\bar d=o(\Delta/\log\Delta)$, and the full fan-out then already improves on the sequential circuit.
\end{proof}

Conversely the bound exceeds one for degree-regular graphs, where no member of the family improves on the sequential circuit; the molecules of Table~\ref{tab:molecules} and the star ladder of Appendix~\hyperref[app:stars]{D} interpolate between these limits, their distance from the star envelope measuring how much off-hub coupling density dilutes the gain.

The compression assumes the fan-out tree can be embedded, which needs branch points of coordination at least three; a degree-two line or ring has none and degrades the tree to linear depth, while heavy-hex, mostly degree two but branching at its degree-three junctions, hosts it. On real hardware the routed cost adds to Eq.~\eqref{eq:starvol}, and the resulting gain is largest at an intermediate coordination that matches the fan-out to the device topology (Appendix~\hyperref[app:connectivity]{E}).

\subsection{Resource comparison protocol}

This section reports the resource comparison between the sequential and scheduled-gadget compilations under a fixed protocol. The protocol fixes everything except the compilation: the identical term sequence (second-order product formula, merged step boundaries, barriers removed before transpilation) is compiled by each strategy and transpiled to two hardware targets: a heavy-hex superconducting target \cite{chamberland_2020}, the ibm\_aachen coupling map with basis $\{\mathrm{CZ}, \mathrm{RZ}, \mathrm{SX}, X\}$ and removal of unused wires, and an all-to-all trapped-ion target with the native M{\o}lmer--S{\o}rensen basis $\{R_{XX}, R_X, R_Y, R_Z\}$ \cite{sorensen_molmer_1999}. Transpilation uses Qiskit 2.3.0 \cite{qiskit_2024} at optimization level 3. Routing is the only stochastic element of the chain: the all-to-all results are seed-independent and are obtained from a single transpilation, while every heavy-hex cell is transpiled with 1024 seeds and reported as the median with a bootstrap 95\% confidence interval. For each cell we report the qubit count $w$, the two-qubit depth $d_{2q}$, the two-qubit gate count $n_{2q}$, and the two-qubit volume $V_{2q} = w \cdot d_{2q}$; in a small fraction of seeds the router recruits qubits beyond the allocated $w$, which does not affect the medians. The sequential baseline is the $R=\Delta$ endpoint of the same family, in which every spin keeps a single qubit and the interactions are edge-colored into layers, so that the comparison is between two limits of one construction rather than two unrelated compilations. Both limits share one scheduling primitive: the interactions of a Pauli block are partitioned into the fewest rounds allowed by the register sizes, each spin $k$ appearing in at most $n_k = \lceil d_k/R \rceil$ of them per round. Scheduling commuting two-qubit interactions into parallel layers by edge coloring the interaction graph is a standard compilation primitive \cite{guerreschi_park_2018, lao_browne_2022}, applied to product-formula circuits to minimize the Trotter-layer depth \cite{bringewatt_davoudi_2023}; here it is the degree-constrained ($f$-)coloring \cite{hakimi_kariv_1986, nakano_nishizeki_1988} whose per-vertex capacity $n_k$ is set by the fan-out register size, so that a single parameter interpolates between the proper edge coloring at $R=\Delta$ (the sequential baseline \cite{misra_gries_1992}, against which any gain reflects parallelism alone) and the full fan-out at $R=1$. The round count is the $f$-chromatic index of Remark~\ref{rem:rho}; we compute the coloring by earliest-fit greedy (Appendix~\hyperref[app:stats]{B}), which for the star systems, where the gain is largest, reduces to the trivial round-robin of the hub and is optimal; on denser systems the same greedy sets both the gadget and the baseline round counts and need not reach the optimum. The resource ratios depend on the Trotter step count only through the step boundaries: in the routing-free limit the circuit at $M$ steps is $M$ repetitions of the step circuit up to boundary mergers, so the all-to-all counts are exactly affine in $M$ and the ratios approach their per-step values as $1/M$. The $R=3$ depth ratio is $0.517$, $0.503$, and $0.501$ at $M=1$, $8$, and $32$, the last within $10^{-3}$ of the asymptotic $1/2$; the heavy-hex $R=3$ depth ratios of this star at the same step counts are $0.476$, $0.485$, and $0.481$ with overlapping confidence intervals. The heavy-hex ratios reported here and in Table~\ref{tab:molecules} are medians over 1024 seeds with bootstrap 95\% confidence intervals, and the all-to-all ratios are exact (Appendix~\hyperref[app:stats]{B}). Table~\ref{tab:resources} presents the comparison for the 13-spin star example of Sec.~\hyperref[sec:error_detection]{Error detection and post-selection} (hub degree $D=12$) at $M=32$.

\begin{table*}[tp]
\begin{tabular}{lccccccc}
\toprule
 & & \multicolumn{3}{c}{Heavy-hex (ibm\_aachen)} & \multicolumn{3}{c}{All-to-all (native $R_{XX}$)} \\
Compilation & $w$ & $d_{2q}$\,[95\% CI] & $n_{2q}$ & $V_{2q}$ ratio & $d_{2q}$ & $n_{2q}$ & $V_{2q}$ ratio \\
\midrule
sequential & 13 & 6448 [6378, 6455] & 10812 & 1.00 & 1548 & 1548 & 1.00 \\
$R=1$      & 24 & 4580 [4552, 4632] & 13992 & 1.31 & 1033 & 4258 & 1.23 \\
$R=2$      & 18 & 3354 [3313, 3362] & 9713  & 0.72 & 904  & 2710 & 0.81 \\
$R=3$      & 16 & 3103 [3097, 3113] & 9108  & 0.59 & 775  & 2194 & 0.62 \\
$R=4$      & 15 & 3430 [3358, 3489] & 8893  & 0.61 & 904  & 1936 & 0.67 \\
$R=6$      & 14 & 3996 [3972, 4000] & 8671  & 0.67 & 904  & 1678 & 0.63 \\
$R=12$     & 13 & 6448 [6378, 6455] & 10812 & 1.00 & 1548 & 1548 & 1.00 \\
\bottomrule
\end{tabular}
\caption{Transpiled resource counts for the 13-spin star example (hub degree $D=12$, second-order product formula with $M=32$ Trotter steps). Columns give the qubit count $w$, the two-qubit gate depth $d_{2q}$, the two-qubit gate count $n_{2q}$, and the two-qubit volume $V_{2q} = w \cdot d_{2q}$. Heavy-hex entries are medians over 1024 transpiler seeds; the [95\% CI] after each heavy-hex depth is the bootstrap 95\% confidence interval of the median, and the confidence half-widths of the volume ratios are at most 0.021. The all-to-all entries are exact because transpilation is deterministic without routing. The volume ratio uses the median depth and is relative to the sequential compilation of the same architecture, the $R=\Delta$ endpoint of the same family. $R=12$ reproduces the sequential circuit exactly; $R=8$ allocates the same registers as $R=6$ and produces identical circuits (omitted).}
\label{tab:resources}
\end{table*}

The full fan-out is not the optimum: at $R=1$ the two-qubit volume grows by a quarter to a third across the two architectures, and the scheduling freedom turns the construction into a net win at this hub degree. On the heavy-hex target the optimum sits at $R=3$ with bootstrap probability above 0.99, the two-qubit depth rising on either side of it as the fan-out cost trades against the round count, and the gadget halves the two-qubit depth (ratio 0.481, 95\% confidence interval $[0.480, 0.487]$) while also reducing the two-qubit gate count (0.84): the parallel interaction layers map to disjoint physical edges and save routing overhead in addition to depth. On all-to-all connectivity the depth also halves (0.50 at $R=3$) while the gate count rises (1.42), since there is no routing to save and the fan-out CNOTs are purely additional gates; the volume optimum of 0.62 sits at $R=3$, with $R=6$ close behind. Because the volume weighs depth against width rather than the total gate count, this all-to-all gain records a benefit on depth-limited hardware; where the two-qubit gate count is the binding cost, as on gate-error-dominated trapped-ion devices, the increased count can offset the depth reduction. The $R=12$ row reproducing the sequential circuit is an internal consistency check of the protocol.

\begin{table*}[tp]
\begin{tabular}{lcccccccccc}
\toprule
 & \multicolumn{5}{c}{Heavy-hex (ibm\_aachen)} & \multicolumn{5}{c}{All-to-all (native $R_{XX}$)} \\
System ($N$, $\Delta$) & $R$ & $w$ & $d_{2q}$ & $n_{2q}$ & $V_{2q}$ & $R$ & $w$ & $d_{2q}$ & $n_{2q}$ & $V_{2q}$ \\
\midrule
Tetramethylsilane (13, 12)   & 3  & 16 & 0.48 & 0.84 & 0.59 & 3 & 16 & 0.50 & 1.42 & 0.62 \\
HMPA star model (19, 18)     & 6  & 21 & 0.51 & 0.78 & 0.56 & 6 & 21 & 0.50 & 1.17 & 0.55 \\
Tetraethylsilane (21, 20)    & 12 & 22 & 0.59 & 1.03 & 0.61 & 6 & 24 & 0.40 & 1.11 & 0.46 \\
Two-hub phosphine model (22, 12)& 6  & 24 & 0.63 & 0.86 & 0.69 & 6 & 24 & 0.58 & 1.09 & 0.64 \\
Phosphorus cluster (34, 15)  & 15 & 34 & 1.00 & 1.00 & 1.00 & 6 & 40 & 0.53 & 1.19 & 0.63 \\
Difluoroheptane (16, 6)      & 6  & 16 & 1.00 & 1.00 & 1.00 & 3 & 26 & 0.57 & 1.30 & 0.93 \\
\bottomrule
\end{tabular}
\caption{Volume-optimal schedules at $M=32$ Trotter steps for the demonstration and further spin systems: the 13-spin tetramethylsilane (TMS) star of Table~\ref{tab:resources}, the 19-spin star model of hexamethylphosphoramide (HMPA), tetraethylsilane, a constructed symmetric two-hub phosphine model, and further systems from the classical simulation literature \cite{burov2025largecircuitexecutionnmr, SPINACH_HOGBEN}. $N$ is the number of spins and $\Delta$ the maximum interaction-graph degree; $R$ is the rounds parameter of the quoted schedule, $w$ the qubit count, and $d_{2q}$, $n_{2q}$, $V_{2q}$ the two-qubit gate depth, gate count, and volume ($V_{2q} = w \cdot d_{2q}$). For each architecture the member of the schedule family with the smallest median $V_{2q}$ is quoted; the family contains the sequential circuit, so the optimum never loses; a row at $R=\Delta$ with unit ratios marks a system whose optimum is the sequential circuit itself. Ratios are relative to the sequential baseline of the same architecture, the $R=\Delta$ endpoint of the same family (heavy-hex: medians over 1024 transpiler seeds, confidence half-widths at most 0.02, except 0.032 for the tetraethylsilane volume ratio; all-to-all: exact). The gate-count ratio $n_{2q}$ may exceed one, since the volume, not the gate count, is the selection objective.}
\label{tab:molecules}
\end{table*}

Table~\ref{tab:molecules} extends the comparison to further spin systems at $M=32$. Every system gains on the all-to-all target, where the only requirement is a high-degree hub to parallelize: the volume reduction deepens with the maximum degree $\Delta$, from difluoroheptane ($\Delta=6$, 0.93) to tetraethylsilane ($\Delta=20$, 0.46). Heavy-hex imposes a second requirement, that the fan-out register embed as a localized patch of the lattice so its parallel interactions occupy disjoint physical edges. A system whose interaction graph embeds as one localized register meets it: tetramethylsilane, the HMPA star model, and tetraethylsilane, each a single high-degree hub, and the compact two-hub phosphine model, whose two adjacent hubs and their arms still fit a single localized patch, all gain on heavy-hex (0.56 for HMPA, 0.61 for tetraethylsilane). Systems with several mutually coupled high-degree centers do not. The phosphorus cluster separates the two requirements: its degree inhomogeneity earns an all-to-all gain of 0.63, yet its densely coupled seven-phosphorus core (three degree-15 hubs and four lower-degree centers, all mutually coupled) cannot be embedded together in the sparse heavy-hex lattice, so routing serializes the parallel layer and the volume optimum falls back to the sequential circuit. Difluoroheptane, low-degree and near-uniform, falls back on heavy-hex for the opposite reason, having no hub to fan out; forcing a fan-out schedule on such a system is counterproductive, its best fan-out schedule increasing the two-qubit volume rather than reducing it. Synthetic star systems place the onset of the heavy-hex gain at hub degree five and improve it to 0.30 at degree 60 (Appendix~\hyperref[app:stars]{D}). The resource ratios depend on the step count only through the $1/M$ boundary effect quantified above.

\section{Error detection and post-selection}\label{sec:error_detection}

This architecture detects errors without any algorithmic overhead. Redundancy introduced for a computational purpose that is simultaneously an error-detecting code is known from parity-encoded architectures for combinatorial optimization \cite{lechner_2015, pastawski_2016}; here the same effect arises inside a product-formula simulation circuit, with the checks read from the terminal measurement. After the parallelized gate layer the ancilla qubits are released (Fig.~\ref{fig:bin_tree}): following the basis rotation $W_b$, all interactions in the parallel layer decompose as $ZZ$ rotations (regardless of whether the original Pauli type was $XX$, $YY$, or $ZZ$), and these preserve the code subspace of the repetition code: the logical states $\ket{0}^{\otimes n_k}$ and $\ket{1}^{\otimes n_k}$ are eigenstates of $Z_i Z_j$, so the GHZ-like entangled state produced by the fan-out $F_k$ remains within the code space throughout the interaction layer and can be cleanly uncomputed by the inverse CNOT fan-in $F_k^\dagger$. 

The error detection uses a distance $d=2$ repetition code \cite{gottesman_1997} defined for each spin register $\mathcal{Q}_k$ of size $n_k$, meaning that any single-qubit bit-flip error can be detected but not corrected. 
The code space $\mathcal{C}_k$ is the $+1$ eigenspace of the stabilizer group $\mathcal{S}_k$, generated by:
\begin{equation}
    S_{k,j} = Z_{k,0} Z_{k,j} \quad \text{for } j \in \{1, \dots, n_k-1\}
\end{equation}
where $q_{k,0}$ is the root qubit and $\{q_{k,j}\}_{j=1}^{n_k-1}$ are the ancillas. 
The logical basis states are defined as $\ket{0}_L = \ket{0}^{\otimes n_k}$ and $\ket{1}_L = \ket{1}^{\otimes n_k}$. 

The full simulation circuit is composed of $M \times B$ interaction blocks (see Fig.~\ref{alg:gadget_parallel}). 
Each block implements a cycle of expansion $F_k$, parallel interaction $U_{\text{parallel}}$, and compression $F_k^\dagger$. 
The detectability of stochastic Pauli noise depends on how these errors map to the syndrome register following the application of $F_k^\dagger$.

\textbf{Bit-flip noise ($X$):}
Let $m \in \{0, 1\}^{n_k}$ be a vector denoting the bit-flips accumulated on the physical qubits of register $\mathcal{Q}_k$. 
The unitary decoder $F_k^\dagger$ is an invertible $\mathbb{F}_2$-linear map, so the measured syndrome $s \in \{0, 1\}^{n_k-1}$, read from the $n_k-1$ non-root copies after fan-in, is a fixed linear image of $m$. For a linear fan-in that XORs every copy directly against the root this image is $s_j = m_j \oplus m_0$; the binary CNOT tree actually compiled gives a different but equally invertible image. In either case the fan-in computes parities of the copies, so the syndrome vanishes exactly when every copy carries the same flip, and an error is undetected if and only if $m_0 = m_1 = \dots = m_{n_k-1}$. 
This condition is met only by the identity operation ($m = \vec{0}$) or the logical bit-flip $\bar{X} = X^{\otimes n_k}$ ($m = \vec{1}$). 

During the parallel interaction layer, any weight-1 bit-flip on a single physical qubit $q_{k,j}$ (whether root or ancilla) produces a non-trivial syndrome and is therefore detectable. In particular, a flip on the root ($j=0$) yields $s_i = 0 \oplus 1 = 1$ for all $i$, while a flip on ancilla $j > 0$ yields $s_j = 1$ with all other syndrome bits zero. The interaction gates couple copies in different registers, so a control fault during an interaction propagates an additional bit-flip into the partner register; each register is checked by its own syndrome, so the fault is caught whenever a fanned-out register ($n_k > 1$) is involved, which holds for every interaction of a fanned-out spin.

During the binary tree expansion $F_k$ (see Fig.~\ref{fig:bin_tree}), the situation differs because a bit-flip on a control qubit propagates through subsequent CNOT gates to its descendants. If a bit-flip occurs on the root qubit $q_{k,0}$ at the start of the expansion, it propagates to every qubit in the register, resulting in $m = \vec{1}$ and an undetected logical failure. In contrast, any bit-flip occurring on an internal node or target qubit that does not propagate to the entire cluster results in $s \neq \vec{0}$ and is detectable. Errors during the fan-in $F_k^\dagger$ are analogous: a bit-flip on a copy propagates only to qubits earlier in the reverse tree order and generally does not reach the root, producing a detectable syndrome.

\textbf{Other errors:}
Phase-flip errors ($Z$) commute with the stabilizers $S_{k,j}$ and are not detectable by the syndrome register. Their effect depends on when they occur. Between interaction blocks the copies rest in $\ket{0}$, on which $Z$ acts trivially, and only the root qubit is vulnerable. During an interaction window, however, the register occupies the code space, where a $Z$ on any of the $n_k$ members acts as the logical $\bar{Z}$: the logical dephasing cross-section during the windows is $n_k$ times that of a single qubit, in exchange for the shorter schedule. In the NMR context these events contribute to $T_2$-like broadening. 
Leakage and readout errors are outside the scope of the $d=2$ stabilizer protection, though high-fidelity measurement limits their impact.

\textbf{Error propagation along blocks and post-selection:}
In the simplest case of a device without mid-circuit measurement the ancilla register is not reset between blocks, and any bit-flip error acquired in an early block preserves the parity mismatch relative to the root throughout the remaining unitary evolution: if a copy ends in $\ket{1}$ after one fan-in, subsequent fan-out cycles propagate this anti-correlation through the register, maintaining a non-trivial syndrome across all later blocks.
This allows for a single measurement of all ancillas at the end of the Trotter sequence. The root qubits carry the measurement signal (the NMR observable), while the copy qubits are the syndrome register. The measurement counts are processed via a global post-selection filter defined by the projector $\Pi_0 = \bigotimes_k \ket{\vec{0}}\bra{\vec{0}}_{anc,k}$. The expectation values are then computed over the null-syndrome shots alone.   

To probe the regime where the depth compression matters, we simulate the zero-field NMR spectrum of tetramethylsilane (TMS, \ce{^{29}Si(CH3)4}; Fig.~\ref{fig:tms_system}), a star-shaped 13-spin system in which $^{29}$Si couples identically to twelve protons ($^2J_{\mathrm{SiH}} = 6.6$~Hz \cite{tms_j_1990}). At zero field the Hamiltonian contains only the scalar couplings, $H = 2\pi J \sum_{k=1}^{12} \vec{I}_{\mathrm{Si}} \cdot \vec{I}_{k}$. We take the idealized $\mathrm{A}_{12}\mathrm{X}$ limit of twelve magnetically equivalent protons, in which the small proton--proton couplings commute with the Hamiltonian and the observable and leave the comb positions unchanged; they are omitted. The spectrum of such a system is a comb of lines at the frequencies $(K+\tfrac{1}{2})J$, where $K$ is the total proton angular momentum quantum number \cite{ledbetter_2011, blanchard_2016}: six lines between 9.9 and 42.9~Hz (Appendix~\hyperref[app:demo]{C}). The hub degree of twelve maximizes the parallelization gain of the gadget on the heavy-hex architecture (Fig.~\ref{fig:milestones}), and because the circuits depend on the couplings only through the products $J\,dt$, the conclusions carry over to any star system with rescaled acquisition parameters. A four-spin zero-field spectrum has previously been computed on trapped-ion hardware \cite{seetharam_2023}.

\begin{figure}[htbp]
    \centering
    \includegraphics[width=0.55\columnwidth]{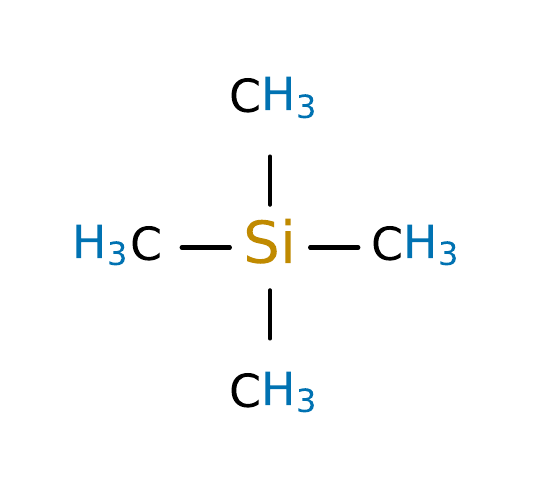}
    \caption{Tetramethylsilane (TMS). The $^{29}$Si nucleus couples identically to the twelve methyl protons ($^2J_{\mathrm{SiH}} = 6.6$~Hz), forming an interaction star of hub degree twelve. In the idealized $\mathrm{A}_{12}\mathrm{X}$ limit the proton--proton couplings commute with the Hamiltonian and the observable and are omitted.}
    \label{fig:tms_system}
\end{figure}

The simulation prepares the computational basis state with the silicon spin and five protons pointing up and seven protons pointing down. A single basis state requires no state-preparation circuit, and this pattern populates every sector $K \geq 1$ of the total proton spin, producing all six comb lines, while avoiding the $K=0$ sector, which holds a single $F=\tfrac{1}{2}$ level and contributes no line; the sector weights are given in Appendix~\hyperref[app:demo]{C}. The observable is the $\gamma$-weighted total magnetization $\sum_k (\gamma_k/\gamma_{\mathrm{H}})(\langle I^Z_k\rangle - i \langle I^Y_k\rangle)$, the quantity to which a zero-field magnetometer signal is proportional \cite{ledbetter_2011, blanchard_2016}. The Hamiltonian conserves every component of the total spin and the initial state is an eigenstate of the total $I^Z$, so the transverse expectation values vanish identically and the free induction decay is real; the measured imaginary quadrature is consistent with zero within shot noise and is discarded in the processing. Although not preparable in a magnetization-based experiment, this initial state is the natural state of the quantum simulation and shares with the physical zero-field experiment the complete comb of transition frequencies, which are set by the Hamiltonian alone. The dependence of the spectrum on the initial state is illustrated in Appendix~\hyperref[app:demo]{C}.

\begin{figure*}[tp]
    \centering
    \includegraphics[width=0.9\textwidth]{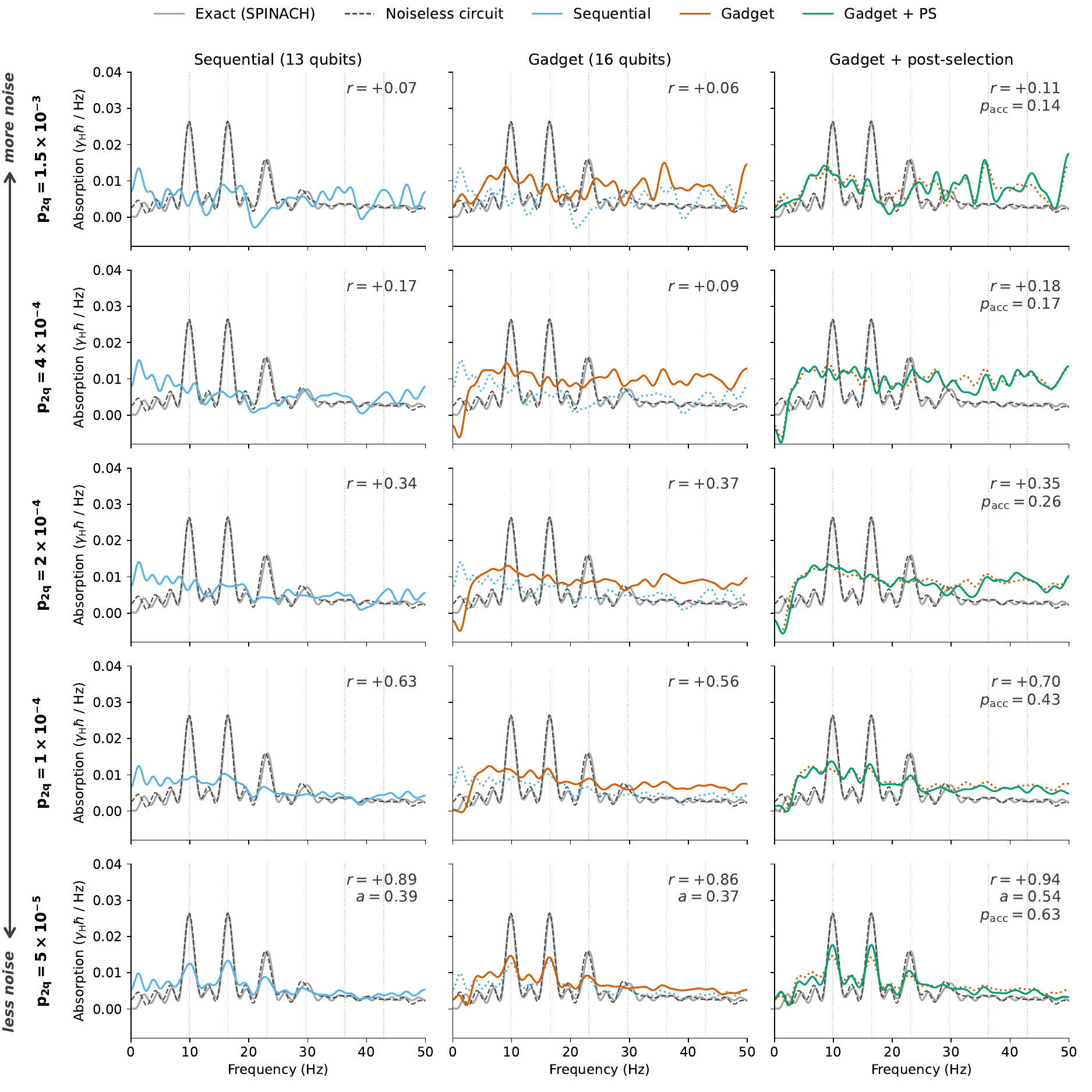}
    \caption{Simulated zero-field NMR spectra of tetramethylsilane (TMS,
$^{29}$Si(CH$_3$)$_4$): a 13-spin star system in which $^{29}$Si couples
identically to twelve protons ($^2J_{\mathrm{SiH}} = 6.6$~Hz, $B = 0$),
giving a comb of lines at frequencies $(K+\tfrac{1}{2}) \cdot 6.6$~Hz,
marked by faint vertical lines in every panel.
The time evolution is a second-order product formula with $M=32$ Trotter
steps; the free induction decay of the $\gamma$-weighted total
magnetization is sampled at 45 points over a 100~Hz spectral width with
4096 shots per point and Fourier transformed after mean subtraction and
1~Hz exponential line broadening. The three columns compare the two
compilations, each transpiled to a fixed subgraph of the ibm\_aachen
coupling map: the sequential compilation (13 qubits, two-qubit depth 4903), the gadget compilation at schedule $R=3$ (16 qubits, two-qubit depth 2452),
and the same gadget circuits with post-selection on the null ancilla
syndrome. Dotted curves repeat the trace of the preceding column. Rows
correspond to noise models in which the complete per-qubit ibm\_aachen
calibration (gate errors, coherence times, readout) is scaled uniformly
so that the median two-qubit error equals the quoted $p_{2q}$; the top
row is the unmodified calibration of the present-day device. Idle
relaxation during scheduled delays and readout errors are included.
Every panel carries two noise-free references: the exact spectrum,
computed independently with SPINACH, and the identical simulation
pipeline run at zero noise; the two nearly coincide; their residual difference is the $M=32$ Trotter error. All spectra share one
absolute scale in units of $\gamma_{\mathrm{H}}\hbar$ per Hz; no trace
is individually normalized. $r$ is the Pearson correlation with the
exact spectrum over the comb region 6--46~Hz; for the lowest-noise row,
the amplitude retention $a$, the slope of the linear fit
$S \approx a\,S_{\mathrm{exact}} + b$, gives the fraction of the exact
comb amplitude that survives the noise; $p_{\mathrm{acc}}$ is the
post-selection acceptance probability, the fraction of shots that pass
the null-syndrome filter. For $p_{2q} \le 1\times10^{-4}$ the post-selected gadget circuits reach, at half the two-qubit depth, spectral quality equal to or better than the sequential compilation; in the bottom row they reach $r = 0.94$ and $a = 0.54$, against $0.89$ and $0.39$.}
    \label{fig:noisy_spectra}
\end{figure*}

Both compilations implement a second-order product formula with $M=32$ Trotter steps, converged at the acquisition grid used here; the residual Trotter error is the small difference between the two reference curves in Fig.~\ref{fig:noisy_spectra}. Commuting single-spin layers at the boundaries of consecutive Trotter steps are merged and adjacent inverse gate pairs cancel; all quoted depths and gate counts refer to these simplified circuits. The sequential compilation places the 13 spins on 13 qubits. The gadget compilation allocates a four-qubit register to the hub (schedule $R=3$) and executes the twelve hub couplings in three parallel rounds of four interactions, for 16 qubits in total. Every acquisition point uses the same step count, so all circuits of one compilation share the same depth. The demonstration pins the circuits to a fixed calibrated 16-qubit subgraph of the ibm\_aachen coupling map, with the same basis and protocol as Table~\ref{tab:resources}; each compilation uses its volume-optimal transpiler draw, the seed of least two-qubit volume $V_{2q}$ over 1024 seeds, with ties broken by the lowest two-qubit gate count, then seed order (Appendix~\hyperref[app:stats]{B}). Volume is independent of the noise level, so the same circuit serves every scaling, and the error-budget distributions of the two compilations do not overlap across the seed pool, so no draw selection could affect the direction of the comparison. The selected sequential circuits have a two-qubit depth of 4903 with 10554 two-qubit gates, against 2452 and 8123 for the gadget: a depth compression by a factor of 2.0 obtained together with a 23\% reduction in two-qubit gate count for this draw, or 18\% at the median of the volume-optimal draws. These absolute counts fall below the full-map medians of Table~\ref{tab:resources} because the demonstration fixes a favorable 16-qubit subgraph and uses the volume-minimizing draw instead of the median seed; the depth ratio near 2.0 is consistent across both.

The noise model is constructed from the published per-qubit calibration of the ibm\_aachen device, a snapshot retrieved in July 2026 (Appendix~\hyperref[app:noise]{A}). Every gate carries thermal relaxation for its calibrated duration on the participating qubits, composed with a depolarizing residual chosen so that the total channel matches the calibrated gate error; readout errors are applied per qubit. At unit scale this parametric model reproduces the reference noise model that Qiskit constructs from the same calibration to within $3\times10^{-7}$ in average gate fidelity. The circuits are scheduled as soon as possible and every idle period carries thermal relaxation for its scheduled duration. The rows of Fig.~\ref{fig:noisy_spectra} divide all error rates by a common factor and multiply the coherence times by the same factor, so that each row is labeled by its median two-qubit error $p_{2q}$; the top row is the device as calibrated today. Each of the 45 acquisition points is sampled with 4096 shots per compilation by trajectory simulation \cite{qiskit_2024}.

The processing chain coincides with that of the SPINACH zero-field examples, mean subtraction, exponential apodization, and absorption-mode display, with the line broadening set to 1~Hz. Two independent checks anchor the simulation chain. The exact spectrum is recomputed with SPINACH from the same Hamiltonian, initial state, observable, and acquisition grid; the two exact free induction decays agree to $5.2\times10^{-6}$ per point, a residual set by the underlying gyromagnetic-ratio tables. Separately, the full noisy pipeline run at zero error reproduces the noiseless reference within shot noise at every acquisition point.

Figure~\ref{fig:noisy_spectra} shows the outcome. At the present-day calibration no compilation recovers any comb structure: the spectra are dominated by relaxation, and the post-selection acceptance of 0.14 sits at the random-parity floor of $2^{-3}$ expected from three fully scrambled ancilla bits. The comb begins to emerge near $p_{2q} = 2\times10^{-4}$ for all three traces, and from $p_{2q} = 1\times10^{-4}$ the post-selected gadget leads the shape recovery ($r=0.70$ against $0.63$ for the sequential compilation). In the lowest-noise row it reaches $r=0.94$ and retains $a=0.54$ of the exact comb amplitude, against $0.89$ and $0.39$ for the sequential compilation, while accepting 63\% of the shots. The structure below 5~Hz is the windowed-transform signature of the slowly varying damping; the sequential spectra carry the larger low-frequency pedestal, in line with their larger idle budget, and the metric band excludes this region.

Although the gadget circuit is twice as shallow and holds 23\% fewer two-qubit gates, its raw spectra trail the sequential ones in shape correlation at $p_{2q}=10^{-4}$ (0.56 against 0.63). The unfiltered average retains the flagged shots, more than half of which carry a detected fault at $p_{2q}=10^{-4}$ and contribute distorted spectra. A phase flip on any member of the encoded register during an interaction window also acts as the logical $\bar{Z}$, so the logical dephasing cross-section during the windows grows with the register size, and the shorter schedule repays this only in part. The depth advantage acts through the idle channel, visible as the smaller low-frequency relaxation pedestal of the gadget spectra, but at these error rates the gate channel dominates. Post-selection changes the balance: the same redundancy that raises the dephasing cross-section makes a large fraction of the faults detectable, and discarding the flagged shots makes the post-selected gadget the most accurate of the three traces at this noise level. At the lowest noise level the difference between the sequential and unfiltered-gadget correlations is within the shot noise of 4096 shots, whereas the improvement from post-selection is evaluated on the same shots. A bootstrap over the comb-band frequency points resolves it: at $p_{2q}=5\times10^{-5}$ the paired gains $r_{\mathrm{PS}}-r_{\mathrm{seq}}=+0.05$ [95\% CI $+0.04,\,+0.06$] and $a_{\mathrm{PS}}-a_{\mathrm{seq}}=+0.14$ [$+0.13,\,+0.15$] both exclude zero, as do $+0.07$ [$+0.05,\,+0.10$] and $+0.05$ [$+0.04,\,+0.07$] at the $10^{-4}$ crossover.

Post-selection carries a sampling cost. Keeping a fraction $p_{\mathrm{acc}}$ of the shots, matching the accepted-shot count of an unfiltered run takes $1/p_{\mathrm{acc}}$ times its wall-clock. The correlations above hold the budget fixed at 4096 shots per circuit and charge the discarded gadget shots against its own total, so this penalty is already reflected in them: the gadget accepts $p_{\mathrm{acc}}=0.63$ at $p_{2q}=5\times10^{-5}$ and $0.43$ at the $10^{-4}$ crossover, overheads of $1.6$ and $2.3$, and still leads. Since the gadget circuit is half as deep, each of its shots would also run in proportionally less gate time on hardware, which offsets part of this overhead where the per-shot runtime is dominated by gate depth. At the present-day $1.5\times10^{-3}$ the acceptance falls to near the $2^{-3}$ random-parity floor ($p_{\mathrm{acc}}=0.14$, a sevenfold overhead), but no compilation recovers the spectrum there, so the sampling cost is not the limiting factor.

\section{Conclusion/outlook}

We have presented a logarithmic-depth compression method for product-formula simulation of Heisenberg-type spin Hamiltonians based on fan-out parallelization. The method replaces a narrow, deep sequential implementation of pairwise spin interactions with a wider and shallower circuit in which each logical spin is represented by a degree-dependent register. A binary CNOT tree distributes the logical state across this register, allowing all interactions of the same Pauli type to be applied in parallel before the register is uncomputed. As a result, the depth of an interaction block scales logarithmically with the maximum degree of the interaction graph, rather than with the number of edges. The required qubit overhead is quadratic for fully connected systems but remains linear for sparse spin topologies such as chains.

The logarithmic depth gain demonstrated here is enabled by the binary CNOT fan-out tree, a primitive whose efficient preparation has been studied in the context of GHZ-state generation \cite{logGHZ} and whose computational power and circuit-depth advantages have been established in several theoretical and experimental settings \cite{doi:10.1137/S0097539799355053, v001a005, q418-pydy, BROADBENT20092489}. Our work shows that this primitive translates directly into a practical compiler-level optimization for Hamiltonian simulation.

Beyond depth reduction, the method enables post-selection against detectable errors without adding extra algorithmic layers. In noisy simulations of the 13-spin TMS system at zero field, under noise models anchored to the published calibration of a present-day quantum processing unit (QPU), the demonstration locates the error rates at which the spectrum of this example is recovered. At the current median two-qubit error of $1.5\times10^{-3}$ no compilation recovers the spectrum; recognizable combs appear near $2\times10^{-4}$; and from $10^{-4}$ the post-selected gadget circuits, at half the two-qubit depth, lead the sequential baseline in spectral quality ($r=0.70$ against $0.63$ at $10^{-4}$, and $r=0.94$ with $a=0.54$ against $0.89$ and $0.39$ at $5\times10^{-5}$). The register redundancy that enlarges the logical dephasing cross-section during the interaction windows is the same structure that flags a large fraction of the faults for discarding; post-selection turns the cost of the encoding into a net benefit.

The gadget compiler thus suits QPUs with spare qubits but a bounded depth budget, converting the register overhead into shallower interaction blocks and detectable-error post-selection. The fan-out parallelization principle extends to any product-formula simulation in which a qubit participates in multiple terms of the same Pauli type. This includes fermionic Hamiltonians under the Jordan-Wigner mapping \cite{jordan_wigner_1928}, where hopping terms decompose into blocks of uniform Pauli type that are individually parallelizable. The same degree criterion carries over (Proposition~\ref{prop:general}): the compression is largest where a mode couples to many others in the Pauli-type-resolved interaction graph, as in molecular electronic-structure Hamiltonians or higher-coordination lattice models, and it is absent for a degree-two one-dimensional chain such as the 120-qubit Fermi-Hubbard model recently simulated~\cite{hartnett2026fastaccuratehighresolutionsimulation}, which sits at the $O(\log 2)=O(1)$ no-compression end. For Hamiltonians with higher-locality or mixed-Pauli terms, the parallelization applies within each same-type block; the overhead analysis then depends on the Pauli-type-resolved degree of each qubit rather than the total interaction degree.

A natural next target is the central-spin physics of color-center registers. The NV$^-$ electron spin in diamond couples to tens of resolved $^{13}$C nuclear spins through individually measured hyperfine tensors, forming the high-degree star geometry in which the gadget compression is largest; the 27-nuclear-spin register of Ref.~\cite{abobeih_2019} is a named physical system of this kind. Every term of the free evolution commutes with the electron $Z$ operator, so the evolution between microwave pulses forms a single fan-out window and pulsed sequences arrive naturally partitioned for the gadget. Simulating polarization-transfer protocols such as PulsePol \cite{schwartz_2018} at these register sizes is demanding for classical methods and experimentally verifiable, which makes it a suitable first application. The construction itself also generalizes from the compiler presented here to a problem-agnostic compiler pass acting on any circuit in which many two-qubit interactions share a qubit; a systematic account is left to future work.

Classical NMR simulation packages typically operate in Liouville space to accommodate mixed states and relaxation, where unrestricted density-matrix evolution incurs $O(2^{2N})$ scaling with the number of coupled spins $N$ and becomes impractical around twenty spins \cite{SPINACH_HOGBEN, kuprovspin, das2025exploring, burov2025largecircuitexecutionnmr}; tensor network methods such as matrix product states (MPS) extend this reach to approximately 32 spins, the Liouville limit, where the broadly distributed entanglement of long-range-coupled NMR spins makes even accurate MPS evolution costly \cite{elenewski2024}. Quantum simulation of the coherent time evolution operates natively in Hilbert space ($2^N$), sidestepping the quadratic overhead in the exponent. In the near term, depth compression via the gadget compiler, combined with error suppression techniques \cite{burov2025largecircuitexecutionnmr}, may enable NISQ devices to push into this classically intractable regime. For instance, the 34-spin phosphorus cluster lies beyond the Liouville limit of density-matrix NMR packages, although the coherent zero-field evolution simulated here is a Hilbert-space ($2^N$) problem whose exploitable symmetry can lower the classical cost further, so the following is a comparison of compiled circuit sizes, not a demonstrated advantage. Its maximally parallel (full fan-out) schedule compiles to a two-qubit depth of 50 (one Trotter step) to 1290 (32 steps) at a width of 96 qubits, against 164 to 5124 at 34 qubits for the sequential compilation, a three- to fourfold depth reduction traded for width in the routing-free circuit. On all-to-all connectivity this depth reduction is also a volume reduction (0.63, Table~\ref{tab:molecules}); on the routed heavy-hex device the densely coupled phosphorus core cannot embed as a single localized register, so the volume-optimal schedule falls back to the sequential circuit, and the cluster illustrates that a width-for-depth volume gain need not appear on every architecture. As Fig.~\ref{fig:milestones} illustrates for the demonstration, whose transpiled circuit sizes are computed classically rather than executed on hardware, the trade shifts the requirement toward the wide-and-shallow regime where current superconducting QPUs offer sufficient qubit counts, and trapped-ion devices come close in both qubit number and demonstrated depth. The noisy demonstration of Sec.~\hyperref[sec:error_detection]{Error detection and post-selection} quantifies the remaining gap for the 13-spin system: median two-qubit error rates near $10^{-4}$, an order of magnitude below present calibrations, together with correspondingly longer coherence times, suffice to recover the spectrum of this 13-spin example.

\paragraph{\textbf{Data availability}}\label{Data}

The SPINACH script of Appendix~\hyperref[app:demo]{C}, the point sets of all figures and tables, the per-seed transpilation metrics of the demonstration-subgraph scan and of the Table~\ref{tab:molecules} endpoint scan, and the frozen calibration snapshot are available at \href{https://doi.org/10.5281/zenodo.22032137}{Zenodo (DOI: 10.5281/zenodo.22032137)}.

\paragraph{\textbf{Acknowledgment}}\label{Acknowledgment}
We acknowledge support from armasuisse Science and Technology (S+T), the Swiss Quantum Initiative (SQI) of the Swiss Academy of Sciences (SCNAT) and the State Secretariat for Education, Research and Innovation (SERI), as well as the National Centre of Competence in Research (NCCR) SPIN, funded by the Swiss National Science Foundation (grant number 51NF40-180604).

\paragraph{\textbf{Author contributions}}
 A.B. conceived the application of fan-out parallelization to NMR Hamiltonian simulation, implemented the compilers and simulations, ran the numerical experiments, and prepared the data and figures. All authors contributed to preliminary research, discussed the results, and prepared the manuscript. C.J. supervised the project.

\paragraph{\textbf{Use of AI tools}}
AI tools were used in an assistive capacity in preparing this manuscript: for language editing of author-written text; for restructuring and reformatting; for bibliographic research and citation verification; for assisting with and checking analysis and plotting code; and for numerical and symbolic verification and internal-consistency review. The authors conceived the research, wrote the manuscript, produced and independently verified all scientific results, and take full responsibility for the entire work. The AI tools used were Claude Code (Anthropic), with the Claude Opus 4.7 and Claude Opus 4.8 models.

\paragraph{\textbf{Competing interests}}
The authors declare no competing interests.

\clearpage

\bibliographystyle{quantum}
\bibliography{bibliography}

\clearpage

\appendix

\section{Noise model}\label{app:noise}

The noise model is generated from a frozen snapshot of the published ibm\_aachen calibration, retrieved in July 2026. Every calibrated gate carries thermal relaxation channels on its participating qubits, using the calibrated $T_1$, $T_2$, and gate duration, composed with a depolarizing residual sized so that the composite channel reproduces the calibrated gate error; where the thermal contribution alone exceeds the calibrated error, the residual is zero. Readout errors enter as symmetric classical bit flips at the calibrated rates. The circuits are scheduled as soon as possible with the calibrated durations, and every idle interval carries a thermal relaxation channel; idle durations are grouped into fifteen geometric bins of ratio 1.3, which rounds any duration by at most 13\%. At unit scale this parametric model reproduces the reference noise model that Qiskit constructs from the same calibration snapshot to within $3\times10^{-7}$ in average gate infidelity. The rows of Fig.~\ref{fig:noisy_spectra} divide all error probabilities by a common factor and multiply the coherence times by the same factor, so that a single parameter, the median two-qubit error $p_{2q}$, labels each noise level. The demonstration therefore locates the crossover along a single trajectory of jointly improving gate fidelity and coherence; whether the post-selected gadget still leads when the two improve at different rates is left to future work. The zero-noise verification of the pipeline runs the identical chain at a scale factor of $10^{6}$, at which the residual error budget lies more than two orders of magnitude below the shot noise. At this scaling the idle relaxation probabilities are of order $10^{-5}$, so the exact thermal-relaxation channels are within $\sim\!10^{-5}$ of the identity, where their Kraus representation is numerically ill-conditioned; Qiskit's construction breaks trace preservation at the $10^{-8}$ level. The idle channels are therefore represented by their Pauli twirl~\cite{geller_2013,wallman_2016}, a stochastic Pauli channel that matches the exact channel here to a process infidelity below $10^{-5}$.

\section{Compilation and transpilation statistics}\label{app:stats}

Heavy-hex transpilation is stochastic through the routing passes alone: without a coupling map, all reported metrics are bit-identical across the seed pool, which we verified explicitly. With routing, the per-seed spread is substantial; the sequential $M=32$ circuits span two-qubit depths from 3747 to 7990 across 1024 seeds (median 6448), and the extreme draws of depth and of gate count are distinct seeds. Medians and their 95\% confidence intervals are computed by bootstrap with $2\times10^{4}$ resamples. The protocol removes compiler-inserted barriers before transpilation, which gives the transpiler cross-block freedom; retaining them leaves the sequential circuits nearly unchanged but inflates the gadget depth, by 17\% on the all-to-all target.

The round schedule of each Pauli block is a degree-constrained ($f$-)coloring of the interaction graph: the interactions are assigned to rounds so that each spin $k$ occupies at most $n_k = \lceil d_k/R\rceil$ of them. The fewest rounds any such schedule can use is the $f$-chromatic index of Remark~\ref{rem:rho}; we schedule by an earliest-fit greedy, placing each interaction in the earliest round in which both endpoints are below their capacity. For a star this attains the minimum $\max_k \lceil d_k/n_k\rceil$ exactly, realized by distributing the hub couplings round-robin across the register, so the star systems are scheduled optimally on both the gadget and the $R=\Delta$ baseline. On denser graphs the same greedy can exceed the $f$-chromatic index on either side; the tabulated ratios for those systems then compare two greedy schedules rather than two optima, so they are not guaranteed conservative.

The demonstration uses, for each compilation, the transpiler draw of least two-qubit volume $V_{2q}$ over the 1024 seeds. For the sequential compilation this draw is unique; for the gadget, 124 of the 1024 draws share the minimum two-qubit depth and volume, and the tie is broken by the lowest two-qubit gate count $n_{2q}$, then by seed order between two draws differing only in single-qubit layers. The selection is assessed with the first-order per-shot error budget
\begin{equation}
\begin{aligned}
\lambda &= n_{2q}\, e_{2} + w\, T_{\mathrm{circ}}/T_1, \\
T_{\mathrm{circ}} &= d_{2q}\, t_{2q} + (d - d_{2q})\, t_{1q},
\end{aligned}
\end{equation}
with $e_2$ the median two-qubit error, $T_1$ the median relaxation time, $w$ the width of the fixed 16-qubit subgraph, common to both compilations (evaluating the sequential arm on its 13 active qubits narrows but does not close the separation: the distributions remain disjoint), $d$ the total depth, and $t_{2q}$, $t_{1q}$ the layer durations. The same budget certifies the comparison against the routing choice: under the uniform scaling of Appendix~\hyperref[app:noise]{A} both terms of $\lambda$ scale as the inverse scale factor, so its ranking holds at every noise level, and Figure~\ref{fig:lambda} shows that the $\lambda$ distributions of the two compilations over the 1024 draws do not overlap, so every gadget draw carries a smaller error budget than every sequential draw. For this demonstration the volume-optimal selected draws lie at the low-$\lambda$ end of both distributions, the sequential one at its minimum and the gadget one within $6\times10^{-4}$ of its minimum.

\begin{figure}[!htb]
    \centering
    \includegraphics[width=0.801\columnwidth]{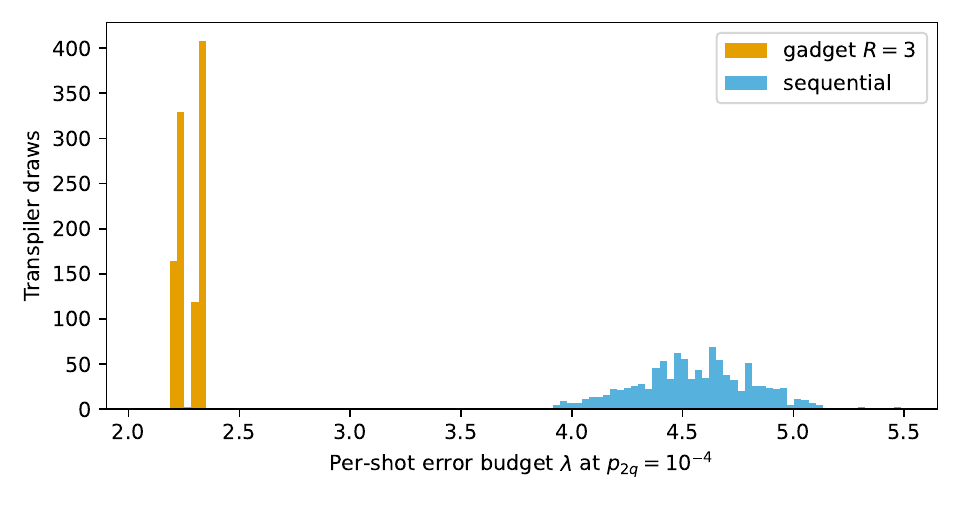}
    \caption{Per-shot error budget $\lambda$ of the two compilations across 1024 transpiler draws on the fixed demonstration subgraph, evaluated at $p_{2q}=10^{-4}$. Each compilation uses its volume-optimal draw, which here also lies at the low-$\lambda$ end of its distribution. The distributions do not overlap: every gadget draw carries a smaller error budget than every sequential draw.}
    \label{fig:lambda}
\end{figure}

\section{Demonstration details}\label{app:demo}

\emph{Sector structure of the star system.} Let $\vec{K} = \sum_{k=1}^{12} \vec{I}_k$ be the total angular momentum of the protons, with quantum number $K \in \{0, \dots, 6\}$, and $\vec{F} = \vec{S} + \vec{K}$ the total spin including the $^{29}$Si spin $\vec{S}$. The Hamiltonian $H = 2\pi J\, \vec{S} \cdot \vec{K}$ commutes with $K^2$ and equals $\pi J [F(F+1) - K(K+1) - \tfrac{3}{4}]$, so every $K \geq 1$ manifold holds exactly two levels, $F = K \pm \tfrac{1}{2}$, split by the frequency $(K + \tfrac{1}{2})J$: the comb; the $K=0$ manifold holds the single level $F = \tfrac{1}{2}$ and contributes no line. The multiplicity of $K$ is $\binom{12}{6-K} - \binom{12}{5-K}$, i.e., 132, 297, 275, 154, 54, 11, 1 for $K = 0, \dots, 6$. The observable commutes with $K^2$ and with the proton permutation symmetry, so the sectors do not mix in the signal. The demonstration's initial basis state (five protons up, seven down) populates the sectors $K = 1, \dots, 6$ with weights 0.375, 0.347, 0.194, 0.068, 0.014, 0.001 and has no $K=0$ component; the weights explain the faint high-$K$ lines of the reference spectra.

\emph{Initial-state dependence.} Figure~\ref{fig:state_dep} compares, on the identical acquisition grid and processing, the spectrum computed from the demonstration's basis state with the spectrum of the standard prepolarized pulse-acquire experiment. The line positions coincide because they are properties of the Hamiltonian; the intensity envelopes differ because they depend on the initial state and detection of each protocol.

\begin{figure}[!htb]
    \centering
    \includegraphics[width=1.0\columnwidth]{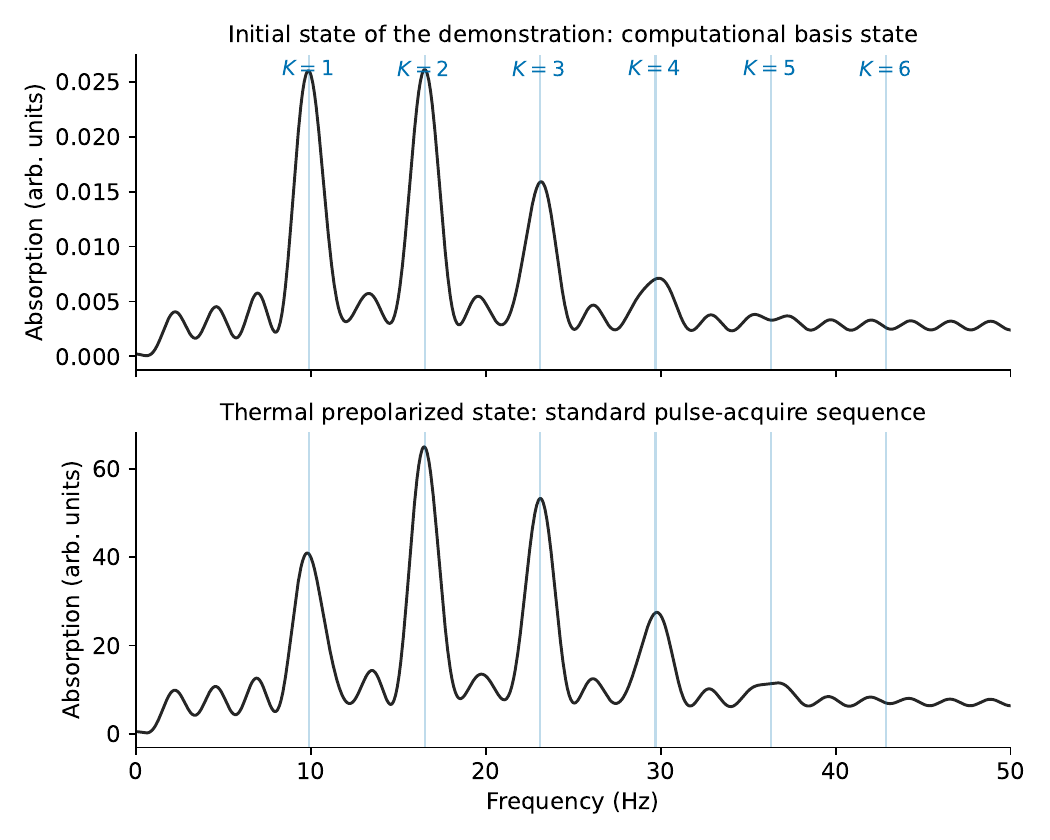}
    \caption{Dependence of the zero-field TMS spectrum on the initial state, computed with SPINACH on the demonstration's 45-point grid and processing. Top: free evolution of the computational basis state of the demonstration. Bottom: the standard prepolarized pulse-acquire experiment (stock zerofield sequence). Vertical lines mark the frequencies $(K+\tfrac{1}{2})J$. Positions and line shapes coincide; the intensity envelopes follow the sector weights of the respective initial states.}
    \label{fig:state_dep}
\end{figure}

\emph{Trotter convergence.} Figure~\ref{fig:nconv} shows the convergence of the second-order product formula on the acquisition grid. The comb-band correlation with the exact spectrum rises from 0.926 at $M=16$ to 0.991 at $M=32$ and 0.998 at $M=48$; $M=32$ is chosen as the point where the agreement saturates at the resolution of the demonstration, and the residual Trotter error is the small difference between the two reference curves of Fig.~\ref{fig:noisy_spectra}.

\begin{figure}[!htb]
    \centering
    \includegraphics[width=1.0\columnwidth]{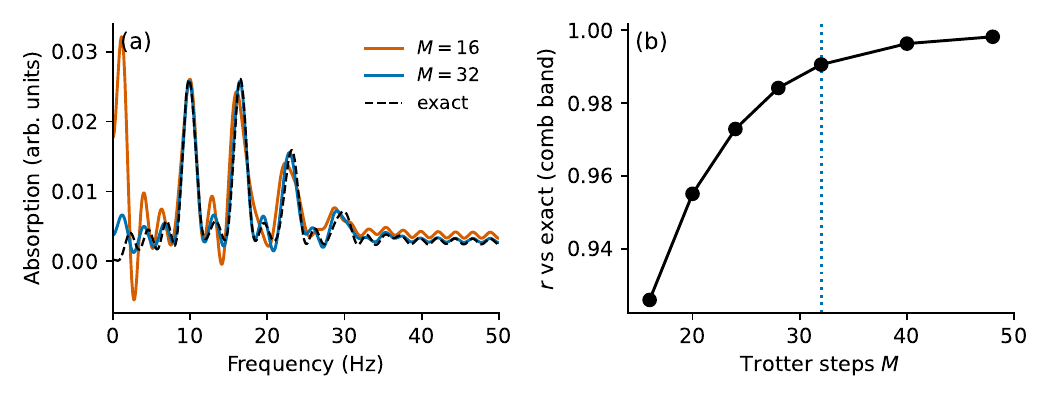}
    \caption{Trotter convergence of the tetramethylsilane (TMS) demonstration reference. (a) Spectra at $M=16$ and $M=32$ against the exact spectrum, on the acquisition grid with 1~Hz line broadening. (b) Comb-band correlation with the exact spectrum as a function of the step count.}
    \label{fig:nconv}
\end{figure}

\section{Star-graph scaling}\label{app:stars}

Figure~\ref{fig:stars} collects the volume-optimal ratios for synthetic star graphs of hub degree $\Delta = 3$ to $60$ on the heavy-hex target, together with the molecules of Table~\ref{tab:molecules}. The stars form the envelope of the attainable gain: at $\Delta = 3$ and $4$ the optimum coincides with the sequential circuit, the gain sets in at $\Delta = 5$, and the optimum improves to 0.30 at $\Delta = 60$. The decrease carries a small even--odd sawtooth: filling the width-minimal schedule takes $\lceil\Delta/2\rceil$ interaction rounds, so an odd hub degree costs one more round than the neighboring even degree against a smoothly growing baseline, leaving odd degrees such as $\Delta = 9$ slightly above the trend. The effect persists in the routing-free limit and across lattices, so it is a property of the discrete schedule rather than transpiler routing or seed noise. The molecules lie on or above the star envelope at their maximum degree; the distance measures how far the interaction graph departs from a single embeddable star: the nearly pure star of HMPA sits close to the envelope, while the dense phosphorus cluster and the low-degree difluoroheptane both sit at parity far above it, the former because its three high-degree hubs, part of a mutually coupled seven-center core, cannot embed together and the latter because it has no hub to fan out.

\begin{figure}[!htb]
    \centering
    \includegraphics[width=0.801\columnwidth]{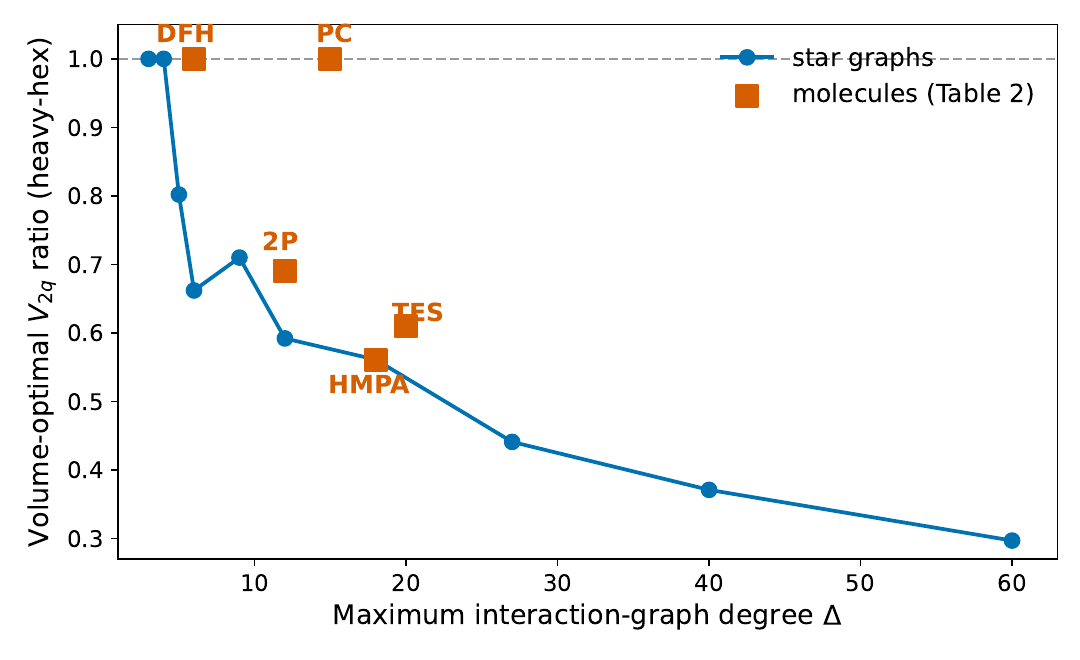}
    \caption{Volume-optimal $V_{2q}$ ratio on the heavy-hex target as a function of the maximum interaction-graph degree $\Delta$: synthetic star graphs (circles, medians over 1024 seeds, or 256 for hub degrees 40 and 60) and the molecules of Table~\ref{tab:molecules} (squares; TMS coincides with the $\Delta=12$ circle and is not drawn separately). The dashed line marks parity with the sequential compilation.}
    \label{fig:stars}
\end{figure}

\section{Device-connectivity dependence}\label{app:connectivity}

Proposition~\ref{prop:staropt} gives the routing-free optimum; on real hardware the routed cost adds to Eq.~\ref{eq:starvol} and depends on how the fan-out embeds in the device graph. The embedding sets a floor: on a device of maximum degree two a copied state reaches at most $2t+1$ qubits after $t$ two-qubit layers, by a light-cone argument, so a logarithmic-depth fan-out to $n$ copies needs coordination above two and otherwise degrades to a linear chain; the routed cost of fan-out is in this sense connectivity-limited~\cite{gokhale2020quantumfanoutcircuitoptimizations}. To map this, we transpiled the volume-optimal schedule for star systems of hub degree $D = 8$, $12$, $18$, and $24$ to devices of increasing coordination number $\kappa$: a line ($\kappa=2$), a heavy-hex lattice, a square grid ($\kappa=4$), a triangular lattice ($\kappa=6$), a king's-graph lattice ($\kappa=8$), and all-to-all connectivity, at 1024 transpiler seeds each. Only the heavy-hex device is a real hardware topology, and it alone is irregular: its qubits are mostly degree two, with degree-three junctions, giving a mean coordination near $2.3$; we plot it at its maximum degree, $\kappa=3$. The other lattices are idealized and included to trace the trend.

The volume-optimal gain is U-shaped in the coordination number (Fig.~\ref{fig:connectivity}). It is weakest at the line ($\kappa=2$), where the fan-out tree cannot branch and the gadget is barely profitable, and weak at all-to-all connectivity for all but the largest hub degree studied, where there is no routing congestion for the fan-out to relieve and the only effect is the added fan-out gates. The optimum sits at an intermediate coordination that matches the fan-out to the device, close to $\kappa \approx D/2$ where this falls within the range tested (hub degrees $8$ and $12$ peak at $\kappa=4$ and $6$); for the higher degrees the peak reaches the densest lattice tested. The gain deepens with hub degree, reaching a volume ratio of $0.19$ for a degree-$24$ hub on the king's graph. The approach to this optimum is not monotonic in $\kappa$: the gadget's routing benefit is concentrated at high coordination (the degree-$24$ fan-out embeds nearly locally only near $\kappa=8$, where its routed depth collapses), whereas the sequential baseline shortens steadily with $\kappa$. At the intermediate triangular point ($\kappa=6$) the baseline therefore gains proportionally more than the gadget and the ratio ticks up ($0.359$ at $\kappa=6$ against $0.337$ at $\kappa=4$) before falling sharply at $\kappa=8$.

\begin{figure}[!htb]
    \centering
    \includegraphics[width=0.872\columnwidth]{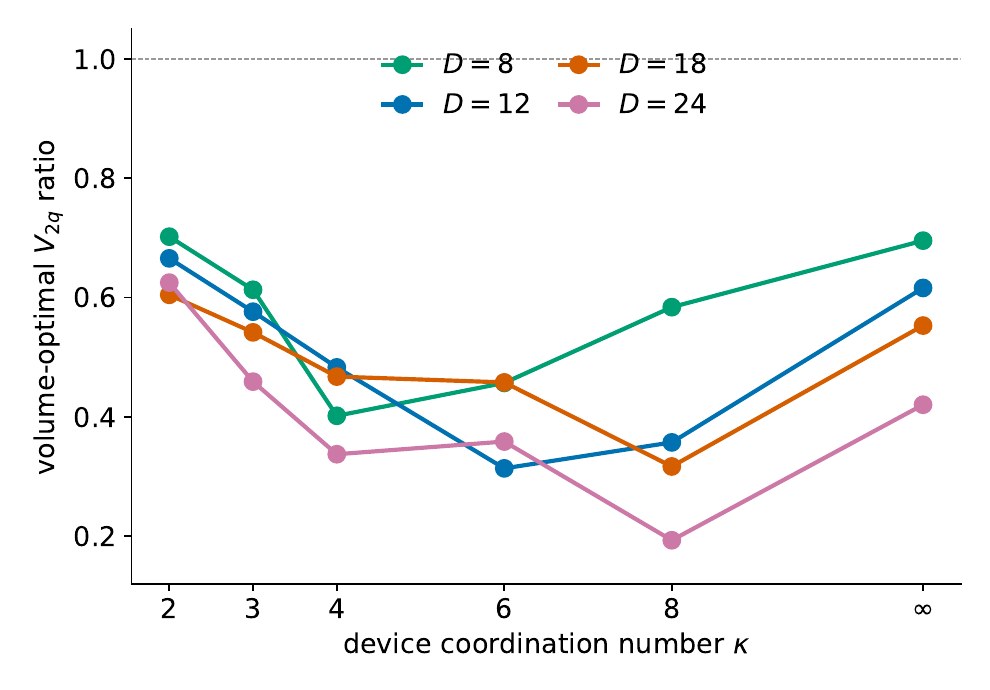}
    \caption{Volume-optimal $V_{2q}$ ratio (relative to the sequential compilation) for star systems of hub degree $D$, as a function of the device coordination number $\kappa$ (medians over 1024 transpiler seeds). The gain is U-shaped, weakest at the line ($\kappa=2$) and, for all but the largest hub degree, at all-to-all connectivity, and strongest at intermediate coordination; the optimum tracks $\kappa \approx D/2$ where it is resolved. Only the heavy-hex point is a real device topology; it is irregular (mean coordination $\approx 2.3$, plotted at its maximum degree $3$), which places its gain between the line and the square grid.}
    \label{fig:connectivity}
\end{figure}

The mechanism is decongestion (Fig.~\ref{fig:decongestion}). A degree-$D$ hub cannot be placed on a qubit of degree $\kappa < D$, so the sequential circuit routes its $D$ interactions through a single congested qubit; the fan-out distributes the hub across a $\lceil D/R\rceil$-qubit register that the device hosts locally, and the total routing distance from the hub to its leaves falls accordingly.

\begin{figure}[!htb]
    \centering
    \includegraphics[width=0.89\columnwidth]{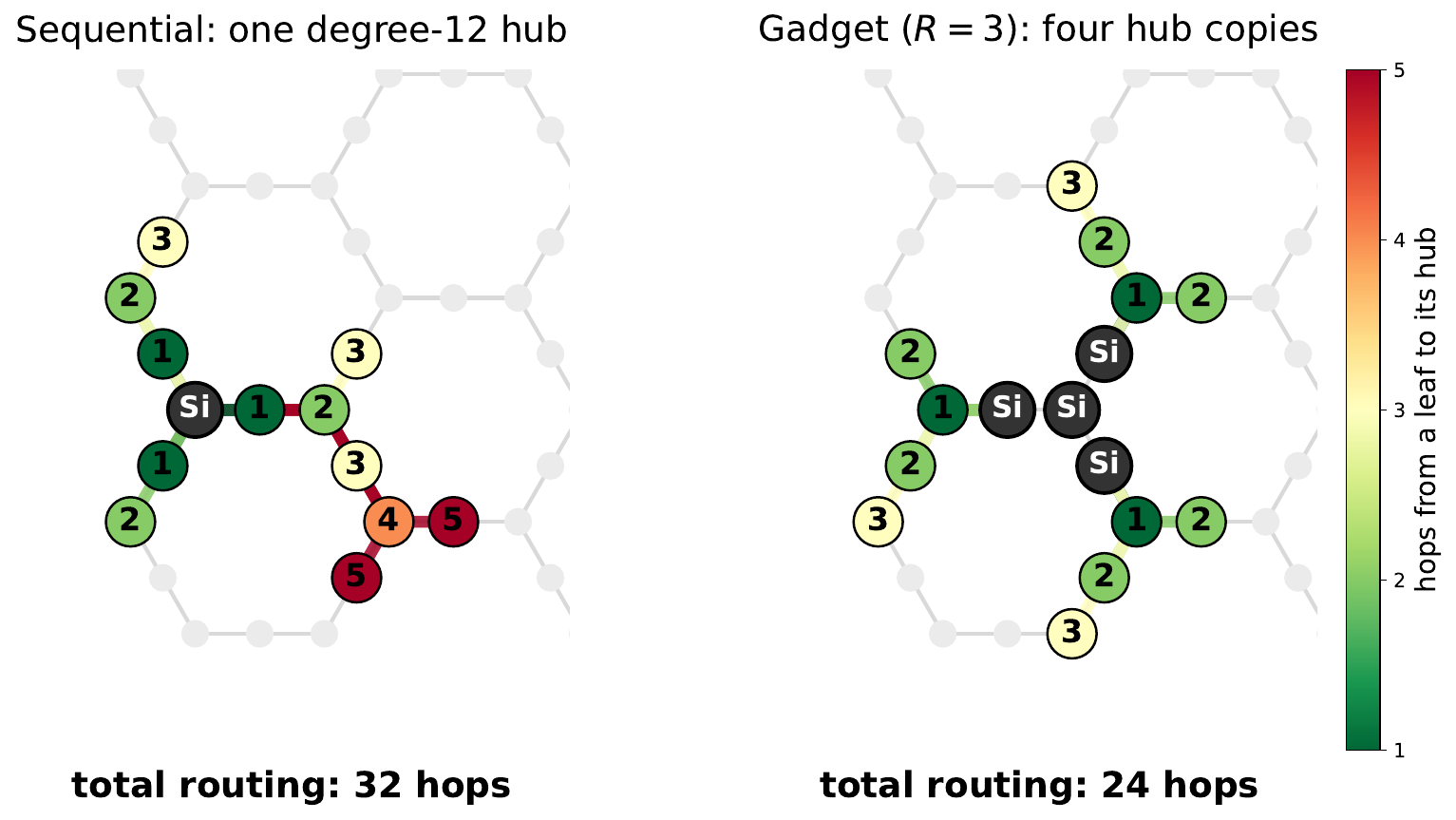}
    \caption{Placement of the 13-spin demonstration star on a heavy-hex lattice (most qubits degree two, with degree-three junctions). Left: the sequential compilation puts the $^{29}$Si hub (dark node) on a single qubit, and its twelve couplings to the proton leaves, colored by hop distance to the hub, must be routed through the congested neighborhood. Right: the gadget at $R=3$ spreads the hub into a four-qubit register that sits among the leaves, shortening the total hub-to-leaf routing distance.}
    \label{fig:decongestion}
\end{figure}

\end{document}